\newcounter{theorem}
\numberwithin{equation}{section}
\newtheorem{thm}[theorem]{Theorem}        %
\newtheorem*{thm*}{Theorem}
\newtheorem{lemma}[theorem]{Lemma}
\newtheorem{cor}[theorem]{Corollary}      %
\newtheorem{defnn}[theorem]{Definition}
\newtheorem{assump}[theorem]{Assumption}  %
\newcommand{\dist}{{\, \rm dist}}
\newcommand{\com}[2]{ \left [ #1 \, ,\, #2 \right ] }
\newcommand{\ds}{\displaystyle}
\newcommand{\be}{\begin{equation}}
\newcommand{\ee}{\end{equation}}
\newcommand{\ba}{\begin{eqnarray}}
\newcommand{\ea}{\end{eqnarray}}
\newcommand{\im}{\mathrm i}
\newcommand{\1}{{\mathbf 1}}
\newcommand{\R}{\mathbb R}
\newcommand{\N}{\mathbb N}
\newcommand{\C}{\mathbb C}
\newcommand{\di}{\mathrm d}
\theoremstyle{definition}
\begin{document}

\title{A Note on the Switching Adiabatic Theorem}

\author{Alexander Elgart}
\email{aelgart@vt.edu}
\author{George A. Hagedorn}
\email{hagedorn@math.vt.edu}
\affiliation{Department
of Mathematics, Virginia Tech., Blacksburg, VA, 24061 }
\date{Revision date: 16 Jun 2012}

\begin{abstract}
We derive a nearly optimal upper bound on the running time
in the adiabatic theorem for a switching family of Hamiltonians.
We assume the switching Hamiltonian is in the Gevrey class $G^\alpha$
as a function of time, and we show that the error in adiabatic
approximation remains small for running times of order
$g^{-2}\,|\ln\,g\,|^{6\alpha}$.
Here $g$ denotes the minimal spectral gap between the eigenvalue(s)
of interest
and the rest of the spectrum of the instantaneous Hamiltonian.
\end{abstract}

\maketitle
%
%
%
%
%
%
%
\section{Introduction}\label{sec:intro}

We consider the dynamical behavior of a quantum system
governed by a time dependent Hamiltonian $H(t)$,
characterized by the following properties:\\
(a) $H(t)$ is a smooth family of self--adjoint, bounded operators, and\\
(b) the time derivative $\dot H(t)$ is compactly supported
in the interval $[0,\tau]$.\\
One can think of such a family as a switching system,
{\it i.e.}, a system that coincides with $H(t)=H_I$ in the past ($t\le 0$),
and switches to the system $H(t)=H_F$ in the future ($t\ge \tau$).

Our goal is to establish an upper bound on the minimal running
time $\tau$ needed to make the error small in the adiabatic theorem
for the switching Hamiltonian $H(s)$.
Our results substantiate the ideas presented in Ref.~\onlinecite{elgart}.

It is convenient to rescale the time $t$ to $s=t/\tau$.
With a minor abuse of notation,
the system then evolves according to the Schr\"odinger equation:
\begin{equation}\label{eq:Sch}
i\,\dot\psi_\tau(s)\ =\ \tau\,H(s)\,\psi_\tau(s)\,,\quad
\psi_\tau(0)\ =\ \psi_I\,,
\end{equation}
where $\psi_I$ is the ground state of $H_I$.
The adiabatic theorem of quantum mechanics ensures that under
certain conditions, if $\psi_\tau(0)$ is close to the ground
state of $H_I$, then $\psi_\tau(1)$ is close to the ground state
of $H_F$ (see the theorems below for details).

Recently it has been realized that the adiabatic approximation
could be used as the fundamental ingredient for a method of
quantum computation \cite{FGGS}. This adiabatic quantum computing (AQC) 
has generated a resurgence of interest in the adiabatic theorem.
In general, quantum computing attempts to exploit quantum mechanics
to obtain a speedup for classically difficult computational problems. It was subsequently shown in Ref.~\onlinecite{ADKS} that AQC provides a universal model, equivalent (in terms of complexity) to the quantum gate model (and hence to other universal models). 
In adiabatic quantum computing, one solves a computational problem
by using an adiabatically changing Hamiltonian function whose initial
ground state encodes the input and whose final ground state encodes the output.
The time $\tau$ taken to reach the final ground state is
the ``running time'' of the quantum adiabatic algorithm.
One would like to minimize it, while at the same time
keeping the distance between the actual final state
and the desired final ground state small. 
The crucial parameter on which $\tau$ depends
is the minimal value $g$
of the spectral gap $g(s)$ between the ground state energy
of $H(s)$ and the rest of its spectrum.

Adiabatic theorems fall into two categories:
those that describe the solutions for {\it all} times,
including times $s\in [0,\,1]$, and those that characterize the solutions
only for large times $s > 1$, where the Hamiltonian is
time--independent again. Interestingly, the latter give more precision
for long times.
We call the first category, the one that applies to all times, ``uniform";
the second is the ``long time" category.

A representative result from the uniform category is the following:
See, {\it e.g.}, Ref.~\onlinecite{AE}:
\begin{thm*}[Uniform adiabatic theorem]\label{thm:adi}
Suppose $H(s)$ is a $\tau$--independent, twice differentiable
family of bounded self--adjoint operators on the interval $[0,\,1]$.
Suppose in addition that
\be\label{eq:gap_g}
g\ :=\ \min_{s\in[0,\,1]}\,g(s)\ >\
0 \quad \mbox{ for all }\quad s\in[0,\,1]\,.
\ee
Then, for any $s\in[0,\,1]$, the solution
$\psi_\tau(s)$ to the initial value problem \eqref{eq:Sch} satisfies
\be\label{eq:adi}
\dist\left(\psi_\tau(s),\,Range\ P(s)\right)\ =\ O\left(\tau^{-1}\right)\,,
\ee
where $P(s)$ is the orthogonal projection onto the corresponding eigenstate of $H(s)$.
\end{thm*}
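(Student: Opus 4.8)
The plan is to prove the uniform adiabatic theorem via the standard parallel-transport (Kato) construction together with an integration-by-parts argument. First I would introduce the adiabatic intertwiner $U_A(s)$, defined as the solution of $\,i\,\dot U_A(s) = \tau H(s) U_A(s) + \im [\dot P(s), P(s)] U_A(s)$, $U_A(0) = \mathbf 1$, which by Kato's construction satisfies the intertwining identity $U_A(s) P(0) = P(s) U_A(s)$ exactly; in particular $U_A(s)\psi_I$ lies in $\mathrm{Range}\,P(s)$. The quantity to control is then $\norm{\psi_\tau(s) - U_A(s)\psi_I}$, which dominates $\dist(\psi_\tau(s), \mathrm{Range}\,P(s))$.

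Next I would write $\psi_\tau(s) - U_A(s)\psi_I = U_\tau(s)(U_\tau(s)^{-1} U_A(s) - \mathbf 1)\psi_I$, where $U_\tau$ is the true propagator of \eqref{eq:Sch}, and differentiate: since $U_\tau$ is unitary, $\norm{\psi_\tau(s) - U_A(s)\psi_I} = \norm{(U_\tau(s)^{-1}U_A(s) - \mathbf 1)\psi_I}$, and
\be
\frac{\di}{\di s}\,U_\tau(s)^{-1} U_A(s) \;=\; \im\, U_\tau(s)^{-1}\,[\dot P(s), P(s)]\, U_A(s).
\ee
Integrating from $0$ to $s$ and inserting the intertwining relation gives a boundary term plus a remainder, and the key point is that $[\dot P, P] P(0) = P[\dot P, P](\cdots)$ can be rewritten using $\dot P = \dot P P + P \dot P$ so that the relevant operator is off-diagonal with respect to $P$. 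The main obstacle is the usual one: the integrand is $O(1)$ in $\tau$, so one must gain a factor of $\tau^{-1}$ by exploiting the fast oscillation of $U_\tau$. This is done by writing the off-diagonal operator $X(s) := [\dot P(s), P(s)]$ (suitably conjugated) as a commutator with the reduced resolvent: using $H(s) P(s) - P(s) H(s) = 0$ one solves the commutator equation $\im[\tau H(s), F(s)] = $ (off-diagonal part), with $F(s)$ given by a Riesz-projection contour integral of $\dot P$ against $(H(s) - z)^{-1}$, whose norm is controlled by $g^{-1}$ times $\norm{\dot P}$. Substituting $X = \im[\tau H, F]$ converts the slowly-decaying integrand into $\frac{\di}{\di s}(U_\tau^{-1} F U_A)$ plus terms carrying an explicit $\tau^{-1}$ (coming from $\dot F$ and the $[\dot P, P]$ term in $\dot U_A$), after dividing through by $\tau$.

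Finally I would collect the estimates: the boundary terms are $O(g^{-1}\tau^{-1}\norm{\dot P})$, and the remaining integral is $O\big(\tau^{-1}\int_0^1 (g^{-2}\norm{\dot P}^2 + g^{-1}\norm{\ddot P} + g^{-2}\norm{\dot H}\,\norm{\dot P})\,\di s\big)$, where the two-derivative hypothesis on $H$ guarantees $\dot P$ and $\ddot P$ are bounded (via $P(s) = \frac{1}{2\pi i}\oint (H(s)-z)^{-1}\di z$ and differentiation under the contour integral), and $g > 0$ makes all resolvent factors finite. Since $H$ is $\tau$-independent, all these constants are independent of $\tau$, yielding the claimed $O(\tau^{-1})$ bound uniformly in $s \in [0,1]$. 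I expect the technically fussiest step to be the careful bookkeeping of the off-diagonal decomposition and verifying that every term either is a total derivative (contributing a boundary term) or already carries an explicit $\tau^{-1}$; the resolvent/Riesz-integral estimates themselves are routine once $g > 0$ is in hand.
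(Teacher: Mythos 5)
Your proposal is essentially correct: the Kato intertwiner $U_A$ with generator $\tau H + \im[\dot P,P]$ does satisfy the intertwining identity, $U_A(s)\psi_I$ lies in ${\rm Range}\,P(s)$, the integrand $U_\tau^{-1}[\dot P,P]U_A\,\psi_I$ reduces to the off-diagonal operator $(1-P)\dot P P$ acting on a vector in ${\rm Range}\,P$, and the commutator equation $[\,H,F\,]=(1-P)\dot P P$ is solvable by the Riesz contour integral with $\|F\|\le 2g^{-1}\|\dot P\|$, after which integration by parts extracts the factor $\tau^{-1}$ (note your displayed equation has a spurious factor of $\im$ given your sign convention for $U_A$, but this is harmless for the norm estimates; also note that $\dot F$ requires $\ddot H$, which is exactly what twice differentiability supplies). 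However, this is a genuinely different route from the one the paper has in mind. The paper treats this theorem as a known background result (citing Avron--Elgart) and, as it explains in the ``Proof strategy'' subsection, within its own framework the uniform theorem follows from the $N=1$ truncation of Nenciu's expansion \eqref{eq:fin_ord}: one writes
$\|P_\tau(s)-P(s)\|\le \tau^{-1}\|B_1(s)\|+\tau^{-1}\int_0^s\|\dot B_1(r)\|\,\di r$
with $B_1$ given by the contour integral \eqref{eq:B1}, yielding the error $O(\tau^{-1}g^{-3})$ at the level of projections rather than wave functions. The two approaches are close cousins --- your $F$ is essentially the off-diagonal part of $B_1$, and both gain the factor $\tau^{-1}$ from the same commutator-with-$H$ mechanism --- but yours works at the level of the vector $\psi_\tau$ and the intertwiner, whereas the paper's works at the level of the density matrix $P_\tau$ and, crucially, generalizes to arbitrary order $N$, which is what the rest of the paper exploits to optimize over $N$ and obtain the $g^{-2}|\ln g|^{6\alpha}$ running time. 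For the order-one statement at hand, your argument is the more classical and self-contained one; the paper's buys the systematic higher-order expansion.
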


A characteristic result from the long time category is
\begin{thm*}[Long time adiabatic theorem]\label{thm:adi_long}
Suppose $H(s)$ is a $\tau$--independent, $C^\infty$ family of bounded
self--adjoint operators that satisfies \eqref{eq:gap_g}.
If $\dot H(s)$ is supported on $[0,\,1]$, then the solution
$\psi_\tau(s)$ to the initial value problem \eqref{eq:Sch} satisfies
\be\label{eq:adi_long}
\dist\left(\psi_\tau(s),\,Range\ P_F\right)\ = \
o\left(\tau^{-n}\right)\quad\mbox{for all}\quad s\ge 1\,,
\ee
for any $n\in\N$.
\end{thm*}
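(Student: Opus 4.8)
\emph{Proof sketch.} The plan is to split the problem in two: first reduce the claim for all $s\ge 1$ to the single time $s=1$ by noting that on $[1,\infty)$ the propagator is just the free evolution generated by $H_F$, which commutes with $P_F$; then prove a high--order adiabatic estimate at $s=1$, exploiting the fact that, by the support hypothesis and smoothness of $H$, \emph{every} derivative of $H$ vanishes at the endpoints $s=0$ and $s=1$. The first step costs nothing; the second is where the work is.

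Let $U_\tau(s)$ be the propagator solving $i\,\partial_s U_\tau(s)=\tau\,H(s)\,U_\tau(s)$, $U_\tau(0)=\1$, so $\psi_\tau(s)=U_\tau(s)\psi_I$. Since $\dot H\equiv 0$ on $[1,\infty)$ and $H\in C^\infty$, all derivatives of $H$ vanish at $s=1$ (and at $s=0$), and $H(s)=H_F$ for $s\ge1$; hence $U_\tau(s)=e^{-i(s-1)\tau H_F}\,U_\tau(1)$ for $s\ge1$. The operator $e^{-i(s-1)\tau H_F}$ is unitary and commutes with $P_F$, so it maps $\mathrm{Range}\,P_F$ onto itself and preserves the distance to that subspace. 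Therefore $\dist\!\left(\psi_\tau(s),\mathrm{Range}\,P_F\right)=\dist\!\left(\psi_\tau(1),\mathrm{Range}\,P_F\right)$ for all $s\ge1$, and it suffices to bound $\dist\!\left(\psi_\tau(1),\mathrm{Range}\,P_F\right)$.

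For this I would use Nenciu's superadiabatic construction. Using the gap \eqref{eq:gap_g}, for each $N\in\N$ one builds a genuine orthogonal projection $P_N(s)$, smooth in $s$, of the form $P_N(s)=P(s)+\sum_{j=1}^N \tau^{-j}B_j(s)$ (plus a bounded correction restoring $P_N^2=P_N$), where each $B_j(s)$ is a finite sum of contour integrals of products of the resolvent $(H(s)-z)^{-1}$ and derivatives of $H(s)$; so $\norm{P_N(s)-P(s)}=O(\tau^{-1})$. One also builds an adiabatic propagator $U_N(s)$ solving $i\,\partial_s U_N=\big(\tau H(s)+K_N(s)\big)U_N$, $U_N(0)=\1$, with $K_N$ a polynomial in $\tau^{-1}$ whose coefficients are again built from the resolvent and derivatives of $H$, chosen so that the \emph{exact} intertwining $U_N(s)\,P_N(0)=P_N(s)\,U_N(s)$ holds. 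Two facts will be used: (i) this exact intertwining; and (ii) since every $B_j$ and every coefficient of $K_N$ contains at least one derivative of $H$, both $P_N(s)-P(s)$ and $K_N(s)$ vanish for $s\notin(0,1)$; in particular $P_N(0)=P_I$, $P_N(1)=P_F$, and $K_N(1)=0$.

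It then remains to estimate $\norm{U_\tau(1)-U_N(1)}$. Differentiating $U_\tau(s)^{-1}U_N(s)$ and applying Duhamel's formula expresses this difference as $\tau^{-1}$ times an integral over $[0,1]$ of a conjugated version of $K_N(s)$ together with the discrepancy between $H+\tau^{-1}K_N$ and $H$; integrating by parts $N$ times — each step trading an oscillatory phase for a derivative, the non--stationary phase being controlled by the gap — gives $\norm{U_\tau(1)-U_N(1)}=O(\tau^{-N})$. The reason this beats the $O(\tau^{-1})$ of the uniform theorem is precisely that all boundary terms produced by these integrations by parts live at $s=0$ or $s=1$ and involve derivatives of $H$, hence vanish by hypothesis; only the bulk integrals remain, and after $N$ steps they are $O(\tau^{-N})$. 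Writing $\psi_I=P_I\psi_I=P_N(0)\psi_I$ and using the intertwining, $U_N(1)\psi_I=P_N(1)U_N(1)\psi_I=P_F U_N(1)\psi_I\in\mathrm{Range}\,P_F$, so for every $s\ge1$
\begin{equation*}
\dist\!\left(\psi_\tau(s),\mathrm{Range}\,P_F\right)\;=\;\dist\!\left(U_\tau(1)\psi_I,\mathrm{Range}\,P_F\right)\;\le\;\norm{\big(U_\tau(1)-U_N(1)\big)\psi_I}\;=\;O(\tau^{-N}),
\end{equation*}
and since $N$ is arbitrary this is $o(\tau^{-n})$ for every $n$. The main obstacle is the recursive construction of the pair $(P_N,K_N)$ and the bookkeeping in the iterated integration by parts — checking that each step genuinely gains a full power of $\tau^{-1}$ and that no boundary obstruction survives — which is exactly where the $C^\infty$ regularity of $H$ and the vanishing of its derivatives at the switching endpoints are used.
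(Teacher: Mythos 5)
Your argument is correct in outline, but it follows a genuinely different route from the one the paper uses. You go through the classical superadiabatic--propagator machinery: build the corrected projections $P_N(s)$, build an auxiliary propagator $U_N$ with an exact intertwining property, and then compare $U_\tau(1)$ to $U_N(1)$ by Duhamel plus repeated non--stationary--phase integrations by parts whose boundary terms die because all derivatives of $H$ vanish at $s=0,1$. The paper never constructs $U_N$ or $K_N$ at all: it works directly with the projector $P_\tau(s)=|\psi_\tau(s)\rangle\langle\psi_\tau(s)|$ and uses the exact truncated identity \eqref{eq:fin_ord}, in which the entire memory of the evolution sits in the single remainder term $\tau^{-N}\int_0^s U_\tau(s,r)\,\dot B_N(r)\,U_\tau(r,s)\,dr$. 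Since each $B_j$ with $j\ge 1$ is built from derivatives of $H$, one has $B_j(s)=0$ for $s\ge 1$, so \eqref{eq:fin_ord} immediately gives $\|P_\tau(s)-P_F\|\le \tau^{-N}\int_0^1\|\dot B_N(r)\|\,dr=O(\tau^{-N})$ for every $N$, and $\dist(\psi_\tau(s),\mathrm{Range}\,P_F)=\|(1-P_F)P_\tau(s)\psi_\tau(s)\|\le\|P_\tau(s)-P_F\|$ finishes it; your reduction to $s=1$ is then not even needed. What your approach buys is the standard wave--operator picture (useful if one wants statements about the full evolution, not just the projector); what the paper's buys is that all the "integration by parts" is absorbed once and for all into the algebraic hierarchy \eqref{eq:nenciuproblema} defining the $B_j$, so the quantitative question reduces to bounding $\dot B_N$, which is exactly what Section \ref{sec:B} then does. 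One caveat on your sketch: the step bounding $\|U_\tau(1)-U_N(1)\|$ as written conflates two mechanisms --- if $K_N$ is chosen so that the intertwining is exact, its leading term $i[\dot P(s),P(s)]$ is only $O(1)$, so a naive Duhamel bound gives nothing and the full gain of $\tau^{-N}$ must come from the iterated integrations by parts (or, equivalently, from showing the off--diagonal discrepancy is $O(\tau^{-N})$ by the recursive construction); this is precisely the bookkeeping you flag as the main obstacle, and it is the part your sketch leaves genuinely open, whereas in the paper's formulation it is already encoded in \eqref{eq:fin_ord}.
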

\begin{remarks}
\begin{enumerate}
\item One can summarize these results by saying that slowly starting
and finishing the interpolation decreases the error.
\item In general, there is no uniformity in $n$ in \eqref{eq:adi_long};
the term on the right hand side is of order $c_n\,\tau^{-n}$ where $c_n$
grows rapidly with $n$ ({\it c.f.}, the following  discussion).
\item The distinction between the uniform and long time
adiabatic theorems has an analog for integrals.
Suppose $g(s)\in C^{\infty}({\mathbb R})$ has support in $[0, 1]$.
Then
\[
\int_0^s\,g(t)\ e^{it\tau}\,dt\ =\
\begin{cases}
\ o(\tau^{-n})&\mbox{ if}\quad s\ge 1\, ; \\
\ O(\tau^{-1})&\mbox{ if}\quad s\in(0,\,1)\,.
\end{cases}\]
\end{enumerate}
\end{remarks}
To describe our result we begin by introducing some notation.
Let $H_I$ and $H_F$ be two self--adjoint operators on a
Hilbert space $\mathcal H$ that satisfy
$\|H_I\|=\|H_F\|=1$.
Let $H(s)$ be a $C^\infty$ family that switches between
$H_I$ and $H_F$ as above, and let $P(s)$ be the orthogonal
projection onto the eigenvalue $E(s)$ of $H(s)$.

We assume the following hypothesis:
\begin{assump}[Minimal gap]\label{MinGap}
For all $s\in[0,\,1]$, we assume the operator $H(s)$
has an eigenprojector $P(s)$,
with eigenenergy $E(s)$ separated by a gap $g(s)$
from the rest of its spectrum.
We assume $\displaystyle g=\min_{s\in[0,\,1]}\,g(s)$ is strictly
greater than zero.
\end{assump}

\begin{remarks}
\begin{enumerate}
\item The eigenprojection $P(s)$ is allowed to be degenerate.
In particular, it can be infinitely degenerate.
\item If ${\rm Rank}\ P_I<\infty$, then it follows from analytic
perturbation theory that\\
${\rm Rank}\ P(s)\,=\,{\rm Rank}\ P_I$ for all $s\in[0,\,1]$.
\end{enumerate}
\end{remarks}

In the quantum computational setting the typical value of the gap $g$ is very small. We are interested in minimizing  the running time $\tau$ so that the error in the adiabatic approximation is small for such values of $g$. We therefore investigate how the
coefficients $c_n$ depend on the gap $g$, and we
minimize the running time $\tau$ so that
$c_n\,\tau^{-n}=o(1)$ for an optimally chosen value of $n$.  
One recent result in this direction is Ref.~\onlinecite{LRH}, which
states that the running time $\tau$ of the order $ g^{-3}$ makes the error in the long time adiabatic theorem small.

It has to be noted that as long as one is interested in just making the error in the adiabatic theorem $o(1)$ in $g\ll1$ rather than say $O(g^n)$, there is not much difference between the uniform and long time adiabatic theorems. 
Our main assertion below is consequently formulated as a uniform adiabatic theorem:

\begin{thm}\label{thm:main}
Under Assumption \ref{MinGap}, the error
$\dist\left(\psi_\tau(s),\,Range\ P(s)\right)$
for $s\in[0,1]$
in the adiabatic theorem is $o(1)$ for $g\ll 1$, whenever
\[
\tau\ \ge\ K\,g^{-2}\,|\ln\,g\,|^{6\alpha}\,,
\]
for some $g$--independent constant $K > 0$,
if the Hamiltonian belongs to the class $G^\alpha$ with $\alpha>1$,
given in the following definition.
\end{thm}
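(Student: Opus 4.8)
\emph{Proof proposal.}

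\textbf{Overall strategy.} The plan is to run the higher--order (superadiabatic) adiabatic expansion attached to $P(s)$, in the spirit of Nenciu and of Joye--Pfister, to track how the $G^\alpha$ regularity propagates through the construction as a function of the truncation order $n$, and then to truncate the expansion at an optimal, $g$--dependent order $n=n_{\mathrm{opt}}(\tau,g)$. The point is that for a switching Hamiltonian all the correction terms vanish together with all their derivatives at $s=0$ and $s=1$, so on $[0,1]$ the only obstruction to smallness is the first--order correction, whose size is $O\!\big((\tau g^2)^{-1}\big)$; this is what pins the exponent of $g$ at $2$, and the logarithmic factor is the price for truncating a merely Gevrey (rather than convergent) expansion.

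\textbf{Step 1: the superadiabatic frame.} I would first recall the iterative scheme producing bounded self--adjoint corrections $B_k(s)$ such that $P^{(n)}_\tau(s):=P(s)+\sum_{k=1}^n \tau^{-k}B_k(s)$ is an approximate instantaneous invariant of the dynamics up to order $\tau^{-(n+1)}$. Each $B_k(s)$ is defined by solving a homological equation whose off--diagonal part is inverted by the reduced resolvent of $H(s)-E(s)$ on $\operatorname{Range}(1-P(s))$ (norm $\le 1/g(s)\le 1/g$), the right--hand side being a fixed noncommutative polynomial in $B_1,\dots,B_{k-1}$ and their $s$--derivatives; the diagonal part is chosen to keep $(P^{(n)}_\tau)^2=P^{(n)}_\tau+O(\tau^{-(n+1)})$. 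Since $\dot H$ and all its derivatives are supported in $[0,1]$, the same holds for every $B_k$, and in particular $P^{(n)}_\tau(0)=P(0)=P_I$, so the initial vector $\psi_I$ lies in $\operatorname{Range}P^{(n)}_\tau(0)$.

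\textbf{Step 2: Gevrey bookkeeping (the crux).} The heart of the matter is to show that the $G^\alpha$ hypothesis on $H(\cdot)$ survives the recursion with the correct loss in $g$: there are $g$--independent constants $C,\rho$ such that
\[
\sup_{s\in[0,1]}\big(\|B_k(s)\|+\|\dot B_k(s)\|\big)\ \le\ C\,\rho^{\,k}\,(k!)^{\alpha}\,g^{-2k}\qquad(k\ge 1).
\]
The two powers of $1/g$ per order come from $B_1=\mathcal L^{-1}\dot P$ with $\dot P=\mathcal L^{-1}[\dot H,P]$ (one reduced resolvent for $\dot P$, one more for the homological equation), together with $\dot S=-S\dot H S\sim g^{-2}$; the factor $(k!)^{\alpha}$ records the Gevrey growth of the $k$--fold differentiation; and $\rho^k$ must absorb the number of terms produced by the Leibniz / noncommutative Fa\`a--di--Bruno combinatorics of the recursion. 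This is the Gevrey analogue of the exponential estimates of Nenciu and of Joye--Pfister, now with the $g$--dependence made explicit, and it is where essentially all of the work lies.

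\textbf{Step 3: assembling the estimate and optimal truncation.} Write $\psi_\tau(s)=V_n(s)\psi_I+R_n(s)$, where $V_n$ is the (unitary, up to $O(\tau^{-1})$) evolution that exactly intertwines $\operatorname{Range}P^{(n)}_\tau(0)$ and $\operatorname{Range}P^{(n)}_\tau(s)$. A Duhamel estimate, using that the relevant phase has derivative $\ge g$ off the eigenspace, yields $\|R_n(s)\|\le C'\,(\tau g^2)^{-n}(n!)^{\alpha}$ uniformly on $[0,1]$, while $\|P^{(n)}_\tau(s)-P(s)\|\le\sum_{k\ge1}\tau^{-k}\|B_k(s)\|\lesssim (\tau g^2)^{-1}$ once $\tau g^2\ge 2\rho$ (the sum is dominated by its first term up to order $n_{\mathrm{opt}}$). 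Hence
\[
\dist\big(\psi_\tau(s),\operatorname{Range}P(s)\big)\ \le\ \|R_n(s)\|+\|P^{(n)}_\tau(s)-P(s)\|\ \le\ C'\,(\tau g^2)^{-n}(n!)^{\alpha}+\frac{C''}{\tau g^2}.
\]
Choosing $n=n_{\mathrm{opt}}\sim(\tau g^2/e\rho)^{1/\alpha}$ makes the first term $\le C'\exp\!\big(-c(\tau g^2)^{1/\alpha}\big)$. Finally, with $\tau=K g^{-2}|\ln g|^{6\alpha}$ one has $\tau g^2=K|\ln g|^{6\alpha}$, so the leading term is $O\!\big(|\ln g|^{-6\alpha}\big)=o(1)$ and the remainder is $O\!\big(\exp(-cK^{1/\alpha}|\ln g|^{6})\big)=o(1)$; the generous exponent $6\alpha$ leaves room to absorb any fixed negative power of $g$ lurking in the constants of Step 2 (or in the gap between $B_n$ and $B_{n+1}$), which is also why the bound is only \emph{nearly} optimal --- the exponent $2$ of $g$ is forced, since at the minimal--gap time the first--order non--adiabatic coupling has size of order $(\tau g^2)^{-1}$ and need not be cancelled by the interpolation, whereas the precise logarithmic power is presumably improvable.

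\textbf{Expected main obstacle.} The genuinely delicate step is Step 2: propagating the Gevrey bound through the nonlinear, noncommutative recursion for the $B_k$ while keeping (i) the geometric constant $\rho$ independent of $k$ and $g$, (ii) the factorial exactly at the power $(k!)^{\alpha}$ and not worse, and (iii) exactly two powers of $1/g$ per order. Once these estimates are in place, the Duhamel bound, the integration by parts against the phase, and the optimization over $n$ are routine.
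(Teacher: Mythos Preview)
Your architecture is exactly the paper's: Nenciu's expansion $P_\tau\sim\sum_j\tau^{-j}B_j$, a Gevrey estimate on the $B_j$, the exact truncation formula
\[
P_\tau(s)=\sum_{j=0}^N\tau^{-j}B_j(s)-\tau^{-N}\int_0^sU_\tau(s,r)\,\dot B_N(r)\,U_\tau(r,s)\,dr,
\]
and optimization over $N$. The remainder is bounded trivially by $\tau^{-N}\int_0^1\|\dot B_N\|$; no integration by parts against a phase is used, so that part of your Step~3 is unnecessary.

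The substantive difference is in Step~2. You assert $\|B_k\|+\|\dot B_k\|\le C\rho^k(k!)^\alpha g^{-2k}$; the paper instead proves a joint estimate on \emph{all} derivatives,
\[
\|B_n^{(k)}(s)\|\ \le\ \frac{1}{(10n+0.3)^2}\,g^{-2n-k}\big(2CR(k+3n)^{2\alpha}\big)^{k+3n},
\]
by a double induction on $(n,k)$ (needed because the recursion for $B_{n+1}$ involves $\dot B_n$, whose estimate in turn requires higher derivatives of lower $B_m$'s via Leibniz). The factor $(3n)^{2\alpha\cdot 3n}$ in $\|B_n\|$ is roughly $(n!)^{6\alpha}$, not $(n!)^\alpha$, and it is precisely this $2\alpha\times 3$ that produces the exponent $6\alpha$: after inserting the bound one gets a remainder $\le(\tau/g)^{1/3}\big(2CR(3N+1)^{2\alpha}/(\tau g^2)^{1/3}\big)^{3N+1}$, whose minimum over $N$ is $(\tau/g)^{1/3}\exp\!\big(-c(\tau g^2)^{1/(6\alpha)}\big)$; killing the prefactor forces $\tau g^2\gtrsim|\ln g|^{6\alpha}$. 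Your sharper $(k!)^\alpha$ bound would yield $\tau g^2\gtrsim|\ln g|^{\alpha}$ and hence a stronger theorem --- so the ``generous exponent $6\alpha$'' is not slack to absorb stray powers of $g$, it is the actual output of the combinatorics the paper carries out. If you can really close the induction with only $(k!)^\alpha$ you improve on the paper; but as stated your Step~2 omits the full family $\|B_n^{(k)}\|$ needed to run the recursion, and the straightforward induction does not give $(k!)^\alpha$.
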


\begin{defnn}\label{def:Gevrey}
An operator valued function $H(s)$
belongs to the Gevrey class $G^\alpha(R)$, Ref.~\onlinecite{Gevrey},
if $\dot H(s)$ is supported in the interval $[0,\,1]$ and
there exists a constant $C$,
such that for any $k\ge 1$,
\[
\max_{s\in[0,\,1]}\ \left\|\,\frac{d^{k}H(s)}{d^{k}s}\,\right\|\ \le\
C\,R^k\,k^{\alpha\,k}\,.
\]
We define $\displaystyle G^\alpha=\bigcup_{R>0}\,G^\alpha(R)$.
\end{defnn}
When $\alpha =1$, this class coincides with the set of analytic functions,
and the only such functions are constants.
For $\alpha >1$ there are functions in the class that are not constant.

\subsection{A Prototypical Example: An Interpolating Hamiltonian}
We call $H(s)$ an interpolating Hamiltonian if
\begin{equation}\label{eq:H}
H(s)\,:=\,(1-f(s))\,H_I\,+\,f(s)\,H_F\,,
\end{equation}
where $f$ is a monotone increasing function on $\R$ that satisfies
\be\label{eq:f}
f\in C^\infty(\R)\,;\quad\mbox{supp}\,\dot f\subset[0,1]\,;\quad
f(0)=0\,;\quad\mbox{and}\quad f(1)=1\,.
\ee
Specifically, we can construct $f\in \C^\infty(\R)$ as follows:
\[
f(t)\ =\ \int_{-\infty}^t\,g(s)\,ds\,,\quad\mbox{where}\quad
g(s)\ =\
\begin{cases}
\ 0&\mbox{ if}\quad s\notin[0,\,1]\, ;\\
\ \beta\,\exp{\left(-\ \frac{1}{s(1-s)}\right)}
&\mbox{ if}\quad s\in(0,\,1)\,.
\end{cases}
\]
Here, $\beta$ is a normalization constant,
chosen so that $f(1)=1$.
For this family, we have 
\[
\left\|\,\frac{d^{k}H
(s)}{d^{k}s}\,\right\|\ =\
\left|\,\frac{d^{k}f(s)}{d^{k}s}\,\right|\ \|H_F-H_I\|\ \le\
C\,k^{2k}\,.
\]
Hence, $H(s)\in G^2$.

\begin{remarks}
\begin{enumerate}
\item For analytic families of Hamiltonians one can obtain
sharper control of the transition probability in the adiabatic
approximation. For two level systems this goes back to Landau and Zener,
who showed that the transition probability was $O(e^{-Cg^2\tau})$;
see Refs.~\onlinecite{HLZ,J,JS,JP2} for rigorous treatments.
The analogous statement (albeit with the less explicit dependence on $g$)
is known to hold for the general analytic $H(s)$;
see Ref.~\onlinecite{NS} and references therein.
For the robust adiabatic theorem with
${\rm Rank }\,H_F\ll \dim \mathcal H$,
one can develop a lower bound on the run time of the form
$\tau=O(g^{-2}/|\ln\,g\,|)$, see Ref.~\onlinecite{CE}.
\item The above remark shows that our result is nearly optimal.
In the switching family setting, the requirement that $\dot H(s)$
have compact support exacts a price on the shortest achievable
run time $\tau$.
In our result, it introduces logarithmic corrections to the
Landau--Zener type bound $\tau=o(g^{-2})$.
\item When the gap $g(s)$ becomes small only for finitely many
times $s$, such as in the Grover search problem \cite{G},
one can devise a gap--sensitive interpolating function $f(s)$
that yields a much better estimate, $\tau=O(g^{-1})$,
see {\it e.g.}, Ref.~\onlinecite{JRS} and references therein.
\end{enumerate}
\end{remarks}
\subsection{Relation with Past Work}
Mathematical analysis of adiabatic behavior has a very rich history
starting with the first rigorous result by Kato \cite{K} for rank one
projections $P(s)$ and Nenciu \cite{N} for more general $P(s)$.
We do not attempt to give an exhaustive survey of the related literature,
but rather focus on articles that steadily improved the understanding of the
long time adiabatic theorem.

The first mathematically rigorous result in this direction goes back to Lenard
\cite{Lenard}. He proved 
a long time theorem for a Hamiltonian function
of finite rank with no level crossings.
The next significant progress came with the 1987 work of Avron, Seiler, and Yaffe
\cite{asy}. They proved a long time result for a general family of Hamiltonians.
They did not consider dependence on the gap $g$.
In 1991 Joye and Pfister \cite{JP} obtained an estimate on the exponential
decay rate for the $2\times2$ matrix case.
Three years later Martinez \cite{martinez} realized that the adiabatic transition
probability could be considered as a tunneling effect in energy space.
He used microlocal analysis to prove an analogous estimate for the general case.
In 2002, Hagedorn and Joye \cite{HJ} proved exponential error estimates for a
special case using only elementary techniques.
In 2004, Nakamura and Sordoni \cite{NS} combined these tunneling estimates
in energy space with the stationary theory of time--dependent scattering to simplify
the method of Martinez. Both the Martinez and Nakamura--Sordoni results depended
on the choice of a point $E(s)$ in the spectral gap for $H(s)$.
Martinez assumed $E(s)$ was analytic in $s$.
Nakamura and Sordoni assumed it was in the Gevrey class
and satisfied $|E^{(k)}(s)|< C_k\,(1+s^2)^{-\rho}$ for some constants
$C_k$ and $\rho$.
Gevrey class Hamiltonians were also considered in the unpublished works
of Jung \cite{Jung} and Nenciu \cite{N2}.

In this work we propose a straightforward method for computing an upper bound
on the running time, based on Nenciu's expansion, introduced in  Ref.~\onlinecite{nenciu}.
Our method might yield less precise estimates than
those obtained by microlocal analysis.
Our estimate contains the logarithmic prefactor,
but our results do not rely on analyticity of $E(s)$.

We are grateful to the referee for making us aware of the unpublished
preprint \cite{N2}. The purpose of Ref.~\onlinecite{N2} was to quantify the $g$--dependence of the error estimates for the long time adiabatic theorem in Ref.~\onlinecite{nenciu} in the
$g\gg 1$ limit. Although the asymptotics for $g\ll 1$ established
in the present article are different from those of Ref.~\onlinecite{N2}, the basic
strategy of Ref.~\onlinecite{N2} is very similar to ours.

\subsection{Proof strategy}
The analysis of the wave function $\psi_\tau(s)$ is hampered by the fact that it carries the highly oscillating, memory-dependent phase. In particular, one cannot easily decompose it  into an asymptotic series in powers of $\tau^{-1}$. However,  the {\em projector} $P_\tau(s)$ onto the state   $\psi_\tau(s)$ (given by $P_\tau(s)=|\psi_\tau(s)\rangle\langle\psi_\tau(s)|$) naturally has no phase, and admits the asymptotic expansion of the form  
\be\label{eq:asymp}
P_\tau(s)\ \sim\ B_0(s)\ +\ \frac{1}{\tau}\,B_1(s)\ + \
\frac{1}{\tau^2}\,B_2(s)\ +\ \ldots \,.
\ee
See the next section for details.

A curious fact about this expansion is that it is instantaneous in $H(s)$ -- no memory term is present in any finite order $\tau^{-j}$. The first term in \eqref{eq:asymp} satisfies $B_0(s)=P(s)$. Moreover, $B_j(s)=0$ for all $j\ge1$ provided that {\em all} derivatives of $H(s)$ vanish. In particular, $P_\tau(s)-P(s)=o(\tau^{-n})$ for any $n\in\N$ and $s\ge1$, at least formally.  To make the argument rigorous, one has to estimate the remainder $R_N$ of the series (after the truncation at some $B_N$). It turns out that  $\|R_N\|$ is proportional to $\tau^{-N}$ and depends linearly on  $\dot B_N(t)$, for $t\in[0,s]$
(the remainder encodes the memory of the process).
Therefore one wants to investigate the dependence $\dot B_N(t)$ on the size of the gap $g(t)$ and on the integer $N$. We shall see in Section \ref{sec:B} that in general, $\| \dot B_N(t)\|<C_N g^{-(2N+1)}(t)$, where $C_N$ grows (super) factorially fast in $N$. The proof of the uniform adiabatic theorem, Theorem \ref{thm:adi}, usually involves truncation of the expansion \eqref{eq:asymp} at $N=1$, with the consequent estimate on the error of the form
$R_1=O( \tau^{-1}g^{-3})$.
For a smooth family $H(s)$, however, one can use the fact that the factor
$\tau^{-N}g^{-(2N+1)}$ decreases with $N$ whenever $\tau^{-1}g^{-2}<1$. Therefore one can look for the optimal value $N_{\rm opt}$ for the truncation of the asymptotic series. Evaluation of $\|R_{N_{\rm opt}}\|$ and estimation of the sum of the first $N_{\rm opt}$ terms in the expansion yields Theorem \ref{thm:main} (see Section \ref{sec:togeth} for details). The logarithmic correction in the theorem is an artifact of the appearance of the extra factor $g^{-1}$ in the expression $\tau^{-N}g^{-(2N+1)}$.
\section{An Asymptotic Expansion for $P_\tau$}
\label{sec:Nenciu}
\setcounter{theorem}{0}

We now consider the initial value problem for the Heisenberg equation,
\be\label{PtauHeis}
\begin{cases}
\ \im\,\dot P_\tau(s)&=\quad\tau\ \com{H(s)}{P_\tau(s)}\\
\ \phantom{\im\,}P_\tau(0)&=\quad P_I
\end{cases}\,.
\ee

In 1993, G.~Nenciu \cite{nenciu} found a general form for the
solution to this Heisenberg equation. His idea
was to look for an asymptotic series of the form \eqref{eq:asymp}. 
Substituting (\ref{eq:asymp}) into (\ref{PtauHeis}), we obtain
a sequence of differential equations
\begin{subequations}\label{eq:nenciuproblem}
\be\label{eq:nenciuproblema}
\im\,\dot B_j(s)\ =\ \com{H (s)}{B_{j+1}(s)},\qquad j=0,\,1,\,\ldots
\ee
In addition, since $P_\tau(s)$ is a projection for each $s$,
we have $P_\tau(s)^2 = P_\tau(s)$. This generates the following sequence
of algebraic relations:
\be\label{eq:nenciuproblemb}
B_j(s)\ =\ \sum_{m=0}^{j}\,B_m(s)\,B_{j-m}(s),\qquad j=0,\,1,\,\ldots
\ee
\end{subequations}
In particular: $B_0(s)^2 = B_0(s)$, so $B_0(s)$ is a projection for
each $s$.

It turns out that the system of hierarchical relations
\eqref{eq:nenciuproblema} and
\eqref{eq:nenciuproblemb} has a unique solution, which is given by the
following recursive construction:
\be\label{eq:defB}
\begin{cases}
\ B_0 (s)&=\quad P(s) \\
\ B_{j}(s)&=\quad\frac{1}{2\pi}\,\int_{\Gamma} R_z(s)
\com{P(s)}{\dot B_{j-1}(s)} R_z (s)\,\di z\ +\ S_j(s)\ -\
2P(s)S_j(s)P(s)\,,
\end{cases}
\ee
where $R_z(s) = (H(s) - z)^{-1}$,
\be\label{eq:defS}
S_j(s)\ =\ \sum_{m=1}^{j-1}\,{B_m(s)\,B_{j-m}(s)} \; ,
\ee
and the contour $\Gamma$ encircles only the ground state
energy. In particular the first order term is given by
\be\label{eq:B1}
B_1(s)\ =\ \frac{1}{2 \pi}\ \int_\Gamma\,R_z(s)\,
\com {P(s)}{\dot P(s)}\,R_z(s)\,\di z \; .
\ee

One can truncate the expansion  \cite{ESn} \eqref{eq:asymp}
at some finite order $k>0$ by
observing that
\be\label{eq:fin_ord}
P_\tau(s)\ =\ B_0(s)  + \frac{1}{\tau} B_1(s)\ +\
\ldots\ +\ \frac{1}{\tau^{N}}\ B_N(s)\  - \ \frac{1}{\tau^{N}}\
\int_0^s\,U_\tau(s,r)\,\dot B_N(r)\,U_\tau(r,s)\,\di r \,.
\ee
where $U_\tau(s,t)$ is the unitary Schr\"odinger propagator that
satisfies
\be
\begin{cases}
\ \im\,\frac{\partial \phantom s}{\partial s}\,U_\tau(s,\,r)&=\quad
\tau\,H(s)\,U_\tau(s,r)\\
\ \phantom{\im\,\frac{\partial \phantom s}{\partial s}}\,
U_\tau(r,\,r)&=\quad\1\,.
\end{cases}
\ee

\section{Estimates on $B_n$}\label{sec:B}
\setcounter{theorem}{0}
Without loss of generality we assume that the constants $C$, $R$ 
in Definition \ref{def:Gevrey} and the value of $1/g(s)$
are all greater than or equal to $1$.
To estimate the minimal run time $\tau$,
we use the following result:

\begin{lemma}
Suppose $H(s)$ belongs to the Gevrey class $G^\alpha(R)$.
Then,
\be\label{eq:bnd_B}
\left\|\,\frac{d^kB_{n}(s)}{d^ks}\,\right\|\ \le\
\frac{1}{(10\,n+0.3)^2}\ g^{-2n-k}\
\left(\,2\,C\,R\,(k+3n)^{2\alpha}\,\right)^{k+3n}\,.
\ee
\end{lemma}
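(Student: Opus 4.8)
The plan is to prove the bound \eqref{eq:bnd_B} by induction on $n$, using the recursion \eqref{eq:defB} and \eqref{eq:defS}. The base case $n=0$ amounts to estimating $\|d^kP(s)/ds^k\|$; this is done by writing $P(s)=\frac{1}{2\pi i}\oint_\Gamma R_z(s)\,dz$ with $\Gamma$ a contour of length $O(g)$ at distance $\sim g/2$ from the spectrum, differentiating under the integral, and expanding $d^k R_z/ds^k$ via the Leibniz-type formula for derivatives of an inverse: $d^k R_z = -\sum R_z\,(d^{k_1}H)\,R_z\,(d^{k_2}H)\,R_z\cdots$, summed over compositions of $k$. Each $R_z$ contributes $\|R_z\|\le 2/g$, each $d^{k_i}H$ contributes $CR^{k_i}k_i^{\alpha k_i}$ by the Gevrey hypothesis, and one converts the resulting sum over compositions into the clean closed form $g^{-k}(2CR\,k^{2\alpha})^k$ using the superadditivity $\sum k_i^{\alpha k_i}\le (\sum k_i)^{\alpha\sum k_i}$ together with a crude bound (of the type $2^k$, or counting compositions) to absorb combinatorial factors; the numerical prefactor $1/(0.3)^2$ is there precisely to leave slack for this.

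For the inductive step, I would estimate the three pieces of $B_j$ separately after applying $d^k/ds^k$. The contour term $\frac{1}{2\pi}\oint_\Gamma R_z\,[P,\dot B_{j-1}]\,R_z\,dz$ is handled by the same Leibniz expansion, now with the extra factor $[P,\dot B_{j-1}]$: differentiating distributes $k$ derivatives among the two $R_z$'s, the $P$, and the $\dot B_{j-1}=B_{j-1}^{(1)}$, and on the last we invoke the induction hypothesis at order $j-1$ with $k'+1$ derivatives, which supplies $g^{-2(j-1)-(k'+1)}$ and a $(k'+1+3(j-1))^{2\alpha(\cdots)}$ factor — exactly what is needed to rebuild $g^{-2j-k}$ and $(k+3j)^{2\alpha(k+3j)}$ after combining with the $R_z$ and $H$ derivatives. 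The quadratic terms $S_j$ and $2PS_jP$ are products of lower-order $B_m$, $B_{j-m}$ with $1\le m\le j-1$; applying Leibniz in $k$ and then the induction hypothesis to each factor gives a sum over $m$ and over the split of derivatives of terms of the form $\frac{1}{(10m+0.3)^2(10(j-m)+0.3)^2}\,g^{-2j-k}(\cdots)^{\cdots}$, and one checks that $\sum_{m=1}^{j-1}\frac{1}{(10m+0.3)^2(10(j-m)+0.3)^2}$ together with the binomial coefficient from splitting derivatives stays below $\frac{1}{(10j+0.3)^2}$ with room to spare — this is the reason for the specific constant $(10n+0.3)^{-2}$, which is engineered to be "summable" under the convolution that the quadratic relation induces.

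The main obstacle is bookkeeping: making the constants close. One must verify that all three contributions, each of which is a sum over compositions of $k$ (from Leibniz) and over $m$ (for the $S_j$ part), add up to something bounded by the single clean expression on the right of \eqref{eq:bnd_B}. The key inequalities doing the work are (i) superadditivity of $j\mapsto j^{\alpha j}$ to collapse $\prod (k_i+3n_i)^{2\alpha(k_i+3n_i)}$ into $(k+3n)^{2\alpha(k+3n)}$; (ii) the fact that the number of compositions of $k$ into a bounded number of parts is at most $2^k$, absorbed by one of the factors of $2$ or $C$ or $R$ (recall $C,R,1/g\ge1$); and (iii) the convolution estimate $\sum_{m}\frac{1}{(10m+0.3)^2(10(n-m)+0.3)^2}\le\frac{c}{(10n+0.3)^2}$ with $c$ small enough that, after also paying the Leibniz binomial $\binom{k}{k'}$, the total is still $\le (10n+0.3)^{-2}$. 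I would first set up the Leibniz/resolvent expansion as a lemma, then run the induction, deferring all these numerical verifications to a short computational appendix.
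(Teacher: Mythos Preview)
Your inductive skeleton matches the paper's proof exactly: base case via the contour representation of $P(s)$ and the composition expansion of $R_z^{(k)}$, inductive step via Leibniz on the three pieces of \eqref{eq:defB}, and the convolution estimate (iii) on $\sum_m (10m+0.3)^{-2}(10(n{-}m)+0.3)^{-2}$ is precisely the paper's inequality~\eqref{crazy}. The gap is in your combinatorial step (i)+(ii). Plain superadditivity gives only the termwise bound
\[
\frac{k!}{k_1!\,k_2!}\,(k_1+a)^{2\alpha(k_1+a)}(k_2+b)^{2\alpha(k_2+b)}\ \le\ \frac{k!}{k_1!\,k_2!}\,(k+a+b)^{2\alpha(k+a+b)},
\]
and summing the multinomial coefficients over a two-, three-, or four-fold Leibniz split yields $2^k$, $3^k$, or $4^k$, not a constant. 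This factor cannot be ``absorbed by one of the factors of $2$ or $C$ or $R$'': the $2^{k+3n}$ in the target $L(n,k)$ is already consumed in the base case by the resolvent estimate $\|R_z^{(k)}\|\le \frac{4}{g}(2CRk^{2\alpha}/g)^k$, and an uncontrolled $2^k$ per inductive step would force the base constant to grow like $2\cdot 4^n$, destroying the bound.

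The paper closes this gap with a sharper pointwise estimate (Lemma~\ref{lem:aux}): for $\alpha>1$,
\[
\frac{k!}{k_1!\,k_2!}\,(k_1+n)^{\alpha(k_1+n)}\,k_2^{\alpha k_2}\ \le\ 4^{-(\alpha-1)\min(k_1+n,\,k_2)}\,(k+n)^{\alpha(k+n)},
\]
so that the Leibniz sum over $k_1+k_2=k$ is dominated by a geometric series with sum $\kappa_\alpha=\frac{4^{-(\alpha-3/2)}}{1-4^{-(\alpha-1)}}$, independent of $k$. This is where the hypothesis $\alpha>1$ enters essentially. With this refinement (and its iterated versions, Corollaries~\ref{cor:aux1}--\ref{cor:aux2} and Lemma~\ref{lem:aux1}) in place of your (i)+(ii), the rest of your outline goes through as written.
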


\begin{proof}
We use induction to prove this lemma.

We define
$$
L(n,k)\ =\ \frac{1}{(10\,n+0.3)^2} \ 
 g^{-2n-k}\ \left(\,2\,C\,R\,(k+3n)^{2\alpha}\,\right)^{k+3n}\,.
$$
Since $B_0(s)=P(s)$, we have $\|B_0(s)\|=1$.
To estimate $\dot B_0(s)$, we use the representation
\be\label{eq:B_0}
B_0(s)\ =\
\frac{1}{2\pi i}\ \int_{\Gamma}\,R_z(s)\,\di z\,,
\ee
where $R_z(s):=(H(s)-z)^{-1}$ and the contour $\Gamma$ is given by
\[
\{z\in \Gamma:\ |z-E_g(s)|=g(s)/2\}\,.
\]
We note that the circumference of $\Gamma$ is $\pi\,g(s)$.

Differentiating both sides of \eqref{eq:B_0} and using
\[
\dot R_z(s)\ =\ -\ R_z(s)\,\dot H(s)\,R_z(s)\,,
\]
we see that
\be\label{eq:dotB_0}
\dot B_0(s)\ =\
-\ \frac{1}{2\pi i}\ \int_{\Gamma}\ R_z(s)\,\dot H(s)\,R_z(s)\,\di z\,.
\ee
Since
\be \label{eq:R_z}
\|R_z(s)\|\ =\ \frac{2}{g(s)}\quad\mbox{for}\quad z\in\Gamma
\ee
and our assumptions require
\be\label{eq:dotH}
\|\dot H(s)\|\ \le\ C\,R\,,
\ee
we obtain
\be\label{eq:dotB_0_bnd}
\|\dot B_0(s)\|\ \le\ C\ \frac{2\,R}{g(s)}\,.
\ee
Recall now the definition of an integer composition:
If $k$ is a positive integer, then a composition $\pi_{k,l}$ of $k$
is an ordered set of positive integers
$p_1,\,p_2,\,\ldots,\,p_l$ whose sum is $k$.
Let $\Pi_k$ denote the set of all possible integer compositions $\pi_{k,l}$
of $k$, and let $\Pi_{k;l}\subset \Pi_k$ denote the set of compositions of
$k$ into exactly $l$ parts.
We note that the cardinality of $\Pi_{k}$ is $2^{k-1}$.

Armed with this notion, we observe that
\be\label{eq:Rrepr}
\frac{d^kR_z(s)}{d^ks}\ = \
\sum_{l=1}^k(-1)^{l}\
\sum_{\pi_{k,l}\in\Pi_{k,l}}c_{\pi_{k,l}}\ \left(
\prod_{p_i\in\pi_{k,l}}\left\{R_z(s)\ \frac{d^{p_i}H(s)}{d^{p_i}s}\right\}\
\right)\ R_z(s)\,,
\ee
where the $c_{\pi_{k,l}}$ are the multinomial coefficients
\be\label{eq:suml'}
c_{\pi_{k,l}}\ = \  {k \choose p_1, \ldots, p_l}
\,.
\ee
In particular,
\be\label{eq:suml}
\sum_{\pi_{k,l}\in\Pi_{k,l}}c_{\pi_{k,l}}\ \le \ l^k
\ee
by the multinomial theorem.

Taking the norm on both sides of \eqref{eq:Rrepr}
and using  equations \ref{eq:R_z} and \ref{eq:suml} as well as the fact that $\alpha>1$,
we obtain  the following  bound\,:
\ba\nonumber
\left\|\,\frac{d^kR_z(s)}{d^ks}\,\right\|
&\le&\sum_{l=1}^k\ \|R_z(s)\|^{l+1}\
\sum_{\pi_{k,l}\in\Pi_{k,l}}\ c_{\pi_{k,l}}\ \prod_{p_i\in\pi_{k,l}}\
\left\|\,\frac{d^{p_i}H(s)}{d^{p_i}s}\,\right\|
\\[3mm]\nonumber
&\le&\ R^k\ \sum_{l=1}^k\ C^l\,\left(\frac{2}{g(s)}\right)^{l+1}
\sum_{\pi_{k,l}\in\Pi_{k,l}}\
c_{\pi_{k,l}}\,\prod_{p_i\in\pi_{k,l}}\ (p_i)^{\alpha p_i}
\\[3mm]\nonumber
&\le&R^k\ k^{\alpha k}\ \sum_{l=1}^k\ C^l\,l^k\,\left(\frac{2}{g(s)}\right)^{l+1}
\\[3mm]\nonumber
&\le&\frac{4}{g(s)}\ \left(\,\frac{2\,C\,R\,k^{2\alpha}}{g(s)}\,\right)^{k}
\\[3mm] \nonumber
\ea
This initiates the induction since
\ba\label{eq:iniin}
\|B_0^{(k)}(s)\|\ \le\ g/2\ \sup_{z\in \Gamma}\ \|R_z^{(k)}(s)\| \ &\le& 2\,\left(\,\frac{2\,C\,R\,k^{2\alpha}}{g(s)}\,\right)^{k}
\\[3mm] &<& L(0,k)\,.\nonumber
\ea
Next, we verify the induction step:
Suppose
$$
\left\|\,\frac{d^kB_{n}(s)}{d^ks}\,\right\|\ \le\ L(n,\,k),
$$
for some $n$ and all $k$.

For $z\in\Gamma$,
we use definition \eqref{eq:defB} and the Leibniz rule to bound
\ba\nonumber
&&\left\|\,\frac{d^kB_{n+1}(s)}{d^ks}\,\right\|
\\[3mm]\nonumber
&\le&g(s)\
\sum_{\substack{k_1+k_2+\\ k_3+k_4=k}}\,
\frac{k!}{k_1!\,k_2!\,k_3!\,k_4!}\
\left\|\,\frac{d^{k_1}R_z(s)}{d^{k_1}s}\,\right\|\
\left\|\,\frac{d^{k_2}B_{0}(s)}{d^{k_2}s}\,\right\|\
\left\|\,\frac{d^{k_3+1}B_{n}(s)}{d^{k_3+1}s}\,\right\|\
\left\|\,\frac{d^{k_4}R_z(s)}{d^{k_4}s}\,\right\|
\\[3mm]\nonumber
&&+\quad 2\ \sum_{k_1+k_2+k_3=k}\ \frac{k!}{k_1!\,k_2!\,k_3!}\
\left\|\,\frac{d^{k_1}B_0(s)}{d^{k_1}s}\,\right\|\
\left\|\,\frac{d^{k_2}S_{n+1}(s)}{d^{k_2}s}\,\right\|\
\left\|\,\frac{d^{k_3}B_{0}(s)}{d^{k_3}s}\,\right\|
\\[3mm]\label{eq:bndBN}
&&+\quad\left\|\,\frac{d^kS_{n+1}(s)}{d^ks}\,\right\|\,.
\ea
Using \eqref{eq:defS} and the induction estimate \eqref{eq:bnd_B}, we bound
\ba\label{eq:Sbn} \nonumber
\left\|\,\frac{d^kS_{n+1}(s)}{d^ks}\,\right\|&\le&
\sum_{i=1}^n\ \sum_{k_1+k_2=k}\ \frac{k!}{k_1!\,k_2!}\
\left\|\,\frac{d^{k_1}B_i(s)}{d^{k_1}s}\,\right\|\
\left\|\,\frac{d^{k_2}B_{n+1-i}(s)}{d^{k_2}s}\,\right\|
\\[3mm]\nonumber
&\le&\ g^{-2(n+1)-k}\ \left(\,2\,C\,R\,\right)^{k+3(n+1)}
\\[3mm]\nonumber 
&&\hspace{-1cm}\times\ \sum_{i=1}^n\
\sum_{k_1+k_2=k}\ \frac{k!}{k_1!\,k_2!}\
\frac{\left(k_1+3i\right)^{2\alpha\,(k_1+3i)}}{(10\,i+0.3)^2}\
\frac{\left(k_2+3(n+1-i)\right)^{2\alpha\,(k_2+3(n+1-i))}}
{(10\,n+10.3-10\,i)^2}
\\[3mm]\nonumber  
&<&\frac{0.05}{(10\,n+10.3)^2}\ g^{-2(n+1)-k}\
\left(\,2\,C\,R\,\right)^{k+3(n+1)}\
\left(k+3(n+1)\right)^{2\alpha\,(k+3(n+1))}
\\[3mm]\nonumber
&=& 0.05\cdot L(n+1,\,k) \,,
\ea
where, in the last inequality, we have used Corollary \ref{cor:aux2}.

We use this bound together with \eqref{eq:iniin} 
to verify that
\ba\nonumber
&&2\ \sum_{k_1+k_2+k_3=k}\ \frac{k!}{k_1!\,k_2!\,k_3!}\
\left\|\,\frac{d^{k_1}B_0(s)}{d^{k_1}s}\,\right\|\
\left\|\,\frac{d^{k_2}S_{n+1}(s)}{d^{k_2}s}\,\right\|\
\left\|\,\frac{d^{k_3}B_{0}(s)}{d^{k_3}s}\,\right\|
\\[3mm] \nonumber
&\le&\frac{0.4}{(10\,n+10.3)^2}\ g^{-2(n+1)-k}\
\left(\,2\,C\,R\,\right)^{k+3(n+1)}
\\[3mm]\nonumber
&&\hspace{1cm}\times\quad\sum_{k_1+k_2+k_3=k}\
\frac{k!}{k_1!\,k_2!\,k_3!}\ (k_1)^{2\alpha\,k_1}\
\left(k_2+3(n+1)\right)^{2\alpha\,(k_2+3(n+1))}\
(k_3)^{2\alpha\,k_3}
\\[4mm]\nonumber
&\le&\frac{0.4}{(10\,n+10.3)^2}\  g^{-2(n+1)-k}\
\left(\,2\,C\,R\,\right)^{k+3(n+1)}\
\left(k+3(n+1)\right)^{2\alpha\,(k+3(n+1))}
\\[3mm]\label{eq:Ssan}
&=&0.4\cdot L(n+1,\,k)\,,
\ea
where, in the last inequality, we have used Lemma \ref{lem:aux1}.

Next, we bound the first contribution in \eqref{eq:bndBN}.

We first observe that for $n=k=0$, we have 
\ba\nonumber
g\ \left\|R_z(s)\right\|\
\left\|B_{0}(s)\right\|\
\left\|\,\frac{dB_{0}(s)}{ds}\,\right\|\
\left\|R_z(s)\right\|
&=& 
4\  g^{-2}\ \left(\,2\,C\,R\,\right)
\\[3mm]\label{eq:firsttt}
&<&0.02\cdot L(1,\,0)\,.
\ea
For $n=0$ and $k=1$, we have 
\ba\nonumber
&&
g\ \sum_{\substack{k_1+k_2+\\ k_3+k_4=1}}\,
\frac{k!}{k_1!\,k_2!\,k_3!\,k_4!}\
\left\|\,\frac{d^{k_1}R_z(s)}{d^{k_1}s}\,\right\|\
\left\|\,\frac{d^{k_2}B_{0}(s)}{d^{k_2}s}\,\right\|\
\left\|\,\frac{d^{k_3+1}B_{n}(s)}{d^{k_3+1}s}\,\right\|\
\left\|\,\frac{d^{k_4}R_z(s)}{d^{k_4}s}\,\right\|
\\[3mm]\nonumber
&&\hspace{2cm}< \ 2^5\ 2^{2\alpha}\ g^{-3}\ \left(\,2\,C\,R\,\right)^{2}
\\[3mm]\label{eq:firstta}
&&\hspace{2cm}< \ 0.001\cdot L(1,\,1)\,.
\ea

For $2n+k\ge2$, we bound
\ba\nonumber
&&
\sum_{\substack{k_1+k_2+\\ k_3+k_4=k}}\,
\frac{k!}{k_1!\,k_2!\,k_3!\,k_4!}\
\left\|\,\frac{d^{k_1}R_z(s)}{d^{k_1}s}\,\right\|\
\left\|\,\frac{d^{k_2}B_{0}(s)}{d^{k_2}s}\,\right\|\
\left\|\,\frac{d^{k_3+1}B_{n}(s)}{d^{k_3+1}s}\,\right\|\
\left\|\,\frac{d^{k_4}R_z(s)}{d^{k_4}s}\,\right\|
\\[3mm]\nonumber
&\le&\frac{2^5}{(10\,n+0.3)^2}\ g^{-2(n+1)-k}\
\left(\,2\,C\,R\,\right)^{k+1+3n}
\\[3mm]\nonumber
&&\hspace{.5cm}\times\quad\sum_{\substack{k_1+k_2+\\ k_3+k_4=k}}\
\frac{k!}{k_1!\,k_2!\,k_3!\,k_4!}\
(k_1)^{\alpha\,k_1}\ (k_2)^{\alpha\,k_2}\ \left(k_3+1+3n\right)^{2\alpha\,(k_3+1+3n)}\ (k_4)^{\alpha\,k_4}
\\[3mm]\nonumber
&\le&2^5\ \frac{0.3}{(10\,n+0.3)^2}\   g^{-2(n+1)-k}\ \left(\,2\,C\,R\,\right)^{k+1+3n}\
\left(k+1+3n\right)^{2\alpha\,(k+1+3n)}
\\[3mm]\label{eq:firstt}
&<&0.5\cdot L(n+1,\,k)\,.
\ea
where we have again used Lemma \ref{lem:aux1}.

Combining \eqref{eq:bndBN} -- \eqref{eq:firstt} we arrive at
$$
\left\|\,\frac{d^kB_{n+1}(s)}{d^ks}\,\right\|\ \le\ L(n+1,\,k).
$$
This proves the lemma.\end{proof}

\section{Putting Everything Together}\label{sec:togeth}
\setcounter{theorem}{0}
To estimate the error $\|P_\tau(s)-P(s)\|$,
we use \eqref{eq:fin_ord}, which yields 
\be\label{eq:puttog}
\|\,P_\tau(s)-P(s)\,\|\ \le \ \sum_{j=1}^N \frac{1}{\tau^j}\  \|\, B_j(s)\,\|\,+\, 
\frac{1}{\tau^N}\ \int_0^1\ \|\,\dot B_N(r)\,\|\ \di r\,.
\ee
To find the optimal value for $N$, we estimate the last term first. Using the bound \eqref{eq:bnd_B}, we see that
\ba\nonumber
\frac{1}{\tau^N}\ \int_0^1\ \|\,\dot B_N(r)\,\|\ \di r&<&
\frac{1}{\tau^N}\ g^{-2N-1}\ \left(2\,C\,R(1+3N)^{2\alpha}\right)^{1+3N}
\\[3mm]\label{eq:optimiz}
&=&(\tau/g)^{1/3}\
\left(\,\frac{2\,C\,R\,(3N+1)^{2\alpha}}{(\tau\,g^2)^{1/3}}\,\right)^{3N+1}\,.
\ea
We want to find the value of $N$ that minimizes the right hand side.
Differentiating, we see that the minimizing value of $N$ satisfies
\[
\left\lceil \left(\frac{(\tau\,g^2)^{1/3}}{2\,C\,R}\right)^{1/2\alpha}\ e^{-1} \right\rceil \ \ge \ 3\,N_{\rm opt}+1\ \ge\ \left\lfloor\left(\frac{(\tau\,g^2)^{1/3}}{2\,C\,R}\right)^{1/2\alpha}\ e^{-1}\right\rfloor\,.
\]
Substituting $N_{\rm opt}$ into \eqref{eq:optimiz},
we get the following bound
\[
\frac{1}{\tau^{N_{\rm opt}}}\ \int_0^1\ \|\,\dot B_{N_{\rm opt}}(r)\,\|\ \di r\ \le\
(\tau/g)^{1/3}\
\exp\,\left(\,-\ 2\,e^{-1}\,\alpha\,
\left(\,\frac{(\tau\,g^2)^{1/3}}{2\,C\,R}\,\right)^{1/2\alpha}\,\right)\,.
\]
For $g\ll 1$, the above expression is $o(1)$ for small $g$
provided $\tau$ satisfies
\be\label{eq:tau_opt}
\tau\ \ge\ K\,g^{-2}\,|\ln\,g\,|^{6\alpha}\,,
\ee
for some sufficiently large, $g$--independent constant $K > 2$. 

To estimate the sum of the  first $N=N_{\rm opt}$ terms on the right hand side of \eqref{eq:puttog} we again use the bound \eqref{eq:bnd_B} to get
\begin{eqnarray*}&&\sum_{j=1}^{N_{\rm opt}}  \frac{1}{\tau^j}\ \|\, B_j(s)\,\| \ \le \ \sum_{j=1}^{N_{\rm opt}} \frac{1}{(\tau\,g^{2})^j}\
\left(\,2\,C\,R\,(3j)^{2\alpha}\,\right)^{3j}\\
&&\hspace{.5cm}\le \ \sum_{j=1}^{N_{\rm opt}}\frac{1}{(\tau\,g^{2})^j}\
\left(\,2\,K^{-1/3}\,C\,R\,(3j)^{2\alpha}\,\right)^{3j}\\
&&\hspace{.5cm} \le \ \sum_{j=1}^{N_{\rm opt}^{1/2}}\frac{1}{(\tau\,g^{2})^j}\ \left(\,K^{-1/3}\,\sqrt{2\,C\,R}(\tau\,g^{2})^{1/6}\,\right)^{3j}\,+\,\sum_{j=N_{\rm opt}^{1/2}}^{N_{\rm opt}}\frac{1}{(\tau\,g^{2})^j}\
\left(\,K^{-1/3}\,(\tau\,g^{2})^{1/3}\,\right)^{3j}\\
&&\hspace{.5cm} = \ \sum_{j=1}^{N_{\rm opt}^{1/2}}\left(\,K^{-1/3}\,\sqrt{2\,C\,R}(\tau\,g^{2})^{-1/6}\,\right)^{3j}\,+\,\sum_{j=N_{\rm opt}^{1/2}}^{N_{\rm opt}} \
K^{-j} \\ &&\hspace{.5cm}\le \  2\,\left(K^{-1}\,\left(2\,C\,R\right)^{3/2}\,(\tau\,g^{2})^{-1/2}\,+\,K^{-N_{\rm opt}^{1/2}}\right)\ = \ o(1)\,,\end{eqnarray*}
for $\tau$ that satisfies \eqref{eq:tau_opt} and $g$ sufficiently small.  This proves our main result, Theorem \ref{thm:main}.

\qed
\begin{acknowledgments} This work was partially supported by National Science Foundation grants DMS--0907165 and DMS--1210982. We are grateful to the referee for useful remarks and corrections.
\end{acknowledgments}
%
\section{Appendix}
\setcounter{theorem}{0}
In this appendix, we collect several technical results.
\begin{lemma}\label{lem:aux}
For all $\alpha\ge 1$,
$k_1=0,\,1,\,2,\,\cdots$,
$k_2=0,\,1,\,2,\,\cdots$, and
$n=1,\,2,\,3,\,\cdots$,
%
\be\label{eq:auxbnd1}
\frac{k!}{(k_1!)\,(k_2!)}\
(k_1+n)^{\alpha\,(k_1+n)}\ k_2^{\alpha\,k_2}\ \le\
4^{-(\alpha-1)\min(k_1+n,\,k_2)}\ (k+n)^{\alpha\,(k+n)}\,,
\ee
where $k=k_1+k_2$.
%

\end{lemma}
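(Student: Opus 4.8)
The plan is to change variables to $a=k_1+n$, $b=k_2$ and $c=a+b=k+n$, and, using the convention $0^0=1$ throughout, to recast the claim in the equivalent form
\[
\binom{k}{k_1}\ \le\ 4^{-(\alpha-1)\min(a,b)}\left(\frac{c^{\,c}}{a^{a}b^{b}}\right)^{\!\alpha},
\]
which follows from the statement by dividing through by $(k+n)^{\alpha(k+n)}$ and noting that $(k_1+n)^{\alpha(k_1+n)}k_2^{\alpha k_2}/(k+n)^{\alpha(k+n)}=(a^{a}b^{b}/c^{c})^{\alpha}$. I would then peel off one copy of the exponent, writing the right-hand side as
\[
\left(\frac{c^{\,c}}{4^{\min(a,b)}\,a^{a}b^{b}}\right)^{\!\alpha-1}\cdot\,\frac{c^{\,c}}{a^{a}b^{b}}\,.
\]
Since $\alpha\ge 1$, the lemma then reduces to the two $\alpha$-independent estimates
\[
\text{(i)}\quad \binom{k}{k_1}\ \le\ \frac{c^{\,c}}{a^{a}b^{b}},
\qquad\qquad
\text{(ii)}\quad c^{\,c}\ \ge\ 4^{\min(a,b)}\,a^{a}b^{b},
\]
because (ii) makes the first factor above at least $1$, after which (i) closes the argument.

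For (i), I would first use that $\binom{m}{j}$ is nondecreasing in $m$ for fixed $j$ to write $\binom{k}{k_1}=\binom{k}{k_2}\le\binom{k+n}{k_2}=\binom{c}{a}$, and then apply the binomial theorem to the single term at $j=a$:
\[
c^{\,c}=(a+b)^{a+b}=\sum_{j=0}^{c}\binom{c}{j}a^{j}b^{\,c-j}\ \ge\ \binom{c}{a}\,a^{a}b^{b}.
\]
(Equivalently one can obtain (i) directly from the fact that $u\mapsto (u+k_2)^{u+k_2}/u^{u}$ is nondecreasing.)

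For (ii), the key observation is that $h(a,b):=(a+b)^{a+b}/(a^{a}b^{b})$ is symmetric in its two arguments, is nondecreasing in each of them on $\{0,1,2,\dots\}$ — which one checks from $\partial_b\log h(a,b)=\log\bigl((a+b)/b\bigr)\ge 0$ for $b\ge 1$, together with the direct comparison $h(a,0)=1\le h(a,1)$ — and satisfies $h(a,a)=4^{a}$. Assuming without loss of generality $a\ge b$, monotonicity in the first slot then gives $h(a,b)\ge h(b,b)=4^{b}=4^{\min(a,b)}$, which is exactly (ii). Assembling (i), (ii) and the algebraic reduction above finishes the proof.

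The only delicate point — and the closest thing to an obstacle — is the bookkeeping in the degenerate cases $k_2=0$ (so $b=0$) and $k_1=0$, where one must keep the convention $0^0=1$ and the bound $a=k_1+n\ge 1$ in mind; in each of these the assertion either becomes an equality or collapses to a trivial instance of (ii). Everything else is elementary, so I do not anticipate any genuine difficulty.
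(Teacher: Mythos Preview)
Your proof is correct and follows essentially the same architecture as the paper's: both arguments split the inequality into the $\alpha=1$ case (your (i), the paper's \eqref{eq:auxbnd}) and the residual $(\alpha-1)$-power factor (your (ii), the paper's \eqref{firsthalf}--\eqref{secondhalf}), with the former handled via the binomial theorem and the latter via the elementary bound $a^ab^b\le 4^{-\min(a,b)}(a+b)^{a+b}$. The only cosmetic difference is that you obtain (ii) from the monotonicity of $h(a,b)=(a+b)^{a+b}/(a^ab^b)$ together with $h(b,b)=4^b$, whereas the paper derives the same inequality from $ab\le(a+b)^2/4$ and a short case analysis.
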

\begin{proof}
We first prove the above inequality for $\alpha=1$.
We begin by noting that
$$
\frac{k!}{(k_1!)\,(k_2!)}\ (k_1+n)^{k_1}\ k_2^{k_2}\ \le\
\sum_{\tilde k_1+\tilde k_2=k}\,\frac{k!}{(\tilde k_1!)\,(\tilde k_2!)}\
(k_1+n)^{\tilde k_1}\ k_2^{\tilde k_2}\ =\
(k+n)^k\,.
$$
Since $(k_1+n)^n\le (k+n)^n$,
the inequality is preserved if we multiply the left hand side by
$(k_1+n)^n$ and right hand side by $(k+n)^n$.
Hence
\be\label{eq:auxbnd}
\max_{\ k_1+k_2=k}\ \frac{k!}{(k_1!)\,(k_2!)}\ (k_1+n)^{k_1+n}\ k_2^{k_2}\ \le \ (k+n)^{k+n}\,.
\ee
This proves the result for $\alpha=1$.

We now consider the case $\alpha>1$.
For $0\le x\le 1$, we have $x(1-x)\le 1/4$. Applying this with
$\ds x=\frac{k_1+n}{k+n}$, we see that
\be\label{dog}
(k_1+n)\ k_2\ \le\ \frac 14\ (k+n)^2.
\ee

Suppose $k_2\le k_1+n$.  From (\ref{dog}), we see that
$(k_1+n)^{k_2}\ k_2^{k_2}\le 4^{-k_2}\ (k+n)^{2k_2}$. We multiply
the left hand side of this inequality by $(k_1+n)^{k_1+n-k_2}$, and we multiply the right hand side by the larger or equal quantity
$(k+n)^{k_1+n-k_2}$.
We conclude that
\be\label{firsthalf}
(k_1+n)^{k_1+n}\ k_2^{k_2}\ \le\ 4^{-k_2}\ (k+n)^{k+n}.
\ee

Similarly, if $k_1+n\le k_2$, then (\ref{dog}) implies
$(k_1+n)^{k_1+n}\ k_2^{k_1+n}\le 4^{-(k_1+n)}\ (k+n)^{2(k_1+n)}$.
We multiply the left hand side of this inequality by
 $k_2^{k_2-k_1-n}$ and the right hand side
by the larger or equal quantity $(k+n)^{k_2-k_1-n}$. We conclude that
\be\label{secondhalf}
(k_1+n)^{k_1+n}\ k_2^{k_2}\ \le\ 4^{-(k_1+n)}\ (k+n)^{k+n}.
\ee

Combining (\ref{firsthalf}) and (\ref{secondhalf}) and raising the
result to the $(\alpha-1)$ power, we see that
$$
(k_1+n)^{(\alpha-1)(k_1+n)}\ k_2^{(\alpha-1) k_2}\ \le \  4^{-(\alpha-1)\min(k_1+n,\,k_2)}\ (k+n)^{(\alpha-1)\,(k+n)}.
$$
This and (\ref{eq:auxbnd}) imply the lemma.
%
%
%
\end{proof}

\vskip 4mm
\begin{cor}\label{cor:aux1}
Suppose $\alpha$, $k_1$, $k_2$, $k$, and $n$ are as in Lemma \ref{lem:aux}.
For $1\le i\le n$, we have
%
\ba\nonumber
&&\frac{k!}{k_1!\,k_2!}\
(k_1+n+1-i)^{\alpha\,(k_1+n+1-i)}\ (k_2+i)^{\alpha\,(k_2+i)}
\\[3mm]\label{eq:auxbnd2}
&&\le\quad 4^{-(\alpha-1)\min(k_1+1,\,k_2+1)}\
(k+n+1)^{\alpha\,(k+n+1)}.
\ea
\end{cor}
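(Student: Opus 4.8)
The plan is to deduce Corollary \ref{cor:aux1} from Lemma \ref{lem:aux} by a change of variables that absorbs the shift $i$ into the arguments. Concretely, set $\tilde k_1 = k_1 + (n+1-i)$ hidden implicitly, but more directly: apply Lemma \ref{lem:aux} with the substitutions $k_1 \mapsto k_1 + (n-i)$ and $n \mapsto 1$ in the first factor, and $k_2 \mapsto k_2$, so that $(k_1 + n + 1 - i)^{\alpha(k_1+n+1-i)} = \big((k_1 + n - i) + 1\big)^{\alpha((k_1+n-i)+1)}$. The issue is that the multinomial coefficient in the corollary is $\frac{k!}{k_1!\,k_2!}$ with $k = k_1 + k_2$, whereas a naive application of the lemma with $k_1$ replaced by $k_1 + n - i$ would produce $\frac{(k_1 + n - i + k_2)!}{(k_1+n-i)!\,k_2!}$, a different and generally larger coefficient. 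So the first thing I would do is record the combinatorial inequality $\frac{k!}{k_1!\,k_2!} \le \frac{(k_1+n-i+k_2)!}{(k_1+n-i)!\,k_2!}$, which holds because the right-hand side is $\binom{k_1+n-i+k_2}{k_2}$ and binomial coefficients $\binom{m+k_2}{k_2}$ are nondecreasing in $m$.

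Second, I would treat the extra factor $(k_2+i)^{\alpha(k_2+i)}$ rather than the plain $k_2^{\alpha k_2}$ that Lemma \ref{lem:aux} delivers. The cleanest route is to apply the lemma \emph{symmetrically}: use it once more with the roles interchanged, i.e. with the shift $i$ placed on the second slot. Precisely, I would invoke Lemma \ref{lem:aux} in the form
\[
\frac{K!}{K_1!\,K_2!}\ (K_1 + m)^{\alpha(K_1+m)}\ (K_2)^{\alpha K_2} \le 4^{-(\alpha-1)\min(K_1+m,K_2)}\,(K+m)^{\alpha(K+m)}
\]
and also its mirror image, then chain the two. Set $K_1 = k_1 + n - i$, $K_2 = k_2 + i - 1$, $m = 1$, so $K_1 + m = k_1 + n + 1 - i$ and I would still be one short of $k_2 + i$ in the second factor; to fix this I instead run the lemma with the shift split as $n+1-i$ on the left and reorganize, using that $(k_1+n+1-i) + (k_2+i) - \text{(base)} $ bookkeeping forces the sum of shifts to be $n+1$, which is exactly why the right-hand side of \eqref{eq:auxbnd2} carries $(k+n+1)^{\alpha(k+n+1)}$. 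The honest statement of this step is: write $a = k_1 + 1$, $b = k_2 + 1$ (so these are the quantities appearing in the $\min$ on the right), view $k_1 + n + 1 - i = (k_1 + 1) + (n - i)$ and $k_2 + i = (k_2+1) + (i-1)$, and apply the $\alpha=1$ master bound \eqref{eq:auxbnd} together with the $(\alpha-1)$-power estimate from the proof of Lemma \ref{lem:aux}, now with $n$ replaced by the total shift $n+1$ distributed across both arguments.

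Third, I would handle the $4^{-(\alpha-1)\min(\cdot,\cdot)}$ decay factor. Lemma \ref{lem:aux} gives decay $4^{-(\alpha-1)\min(k_1+n,k_2)}$; I need $4^{-(\alpha-1)\min(k_1+1,k_2+1)}$. Since $n \ge 1$ and $i \ge 1$, one checks $\min(k_1 + n + 1 - i,\ k_2 + i) \ge \min(k_1 + 1,\ k_2 + 1)$ is \emph{not} automatic (if $i$ is large the first argument can be small), so I cannot simply discard the gain; rather I would argue that whichever of the two factors $(k_1+n+1-i)$ or $(k_2+i)$ is the smaller, its exponent in the $\min$ of the intermediate bound dominates $\min(k_1+1,k_2+1)$ because the smaller factor is always $\ge \min(k_1+1, k_2+1)$ — indeed $k_1 + n + 1 - i \ge k_1 + 1$ when $i \le n$, and $k_2 + i \ge k_2 + 1$ when $i \ge 1$, so \emph{both} arguments individually exceed the respective $k_1+1$ and $k_2+1$, hence their minimum exceeds $\min(k_1+1,k_2+1)$. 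This is the one genuinely delicate point and it turns out to be a clean two-line case check once phrased correctly.

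Finally I would assemble: combinatorial monotonicity of the multinomial coefficient, then the split-shift application of Lemma \ref{lem:aux} (equivalently the $\alpha=1$ bound \eqref{eq:auxbnd} raised to the $\alpha$ power after peeling off the $(\alpha-1)$-power decay), then the $\min$ comparison above, yields \eqref{eq:auxbnd2}. I expect the main obstacle to be purely bookkeeping: making sure the total shift $n+1$ is correctly accounted for when it is distributed as $(n+1-i)$ and $i$ across the two arguments, and verifying that the decay exponent does not degrade below $\min(k_1+1,k_2+1)$; everything else is a direct citation of Lemma \ref{lem:aux}.
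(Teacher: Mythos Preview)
Your proposal is salvageable but muddled, and the route that actually works is different from the paper's.

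The first two paragraphs are mostly dead ends. Replacing $k_1$ by $k_1+n-i$ in Lemma~\ref{lem:aux} produces the wrong binomial coefficient, and your monotonicity fix $\binom{k}{k_2}\le\binom{k_1+n-i+k_2}{k_2}$, while true, does not help because Lemma~\ref{lem:aux} still leaves the second factor as a plain $K_2^{\alpha K_2}$ rather than $(k_2+i)^{\alpha(k_2+i)}$; you acknowledge this yourself (``I would still be one short''). The idea that does work is the one you arrive at only implicitly near the end of the second paragraph: do not \emph{cite} Lemma~\ref{lem:aux}, but \emph{rerun its proof} with both arguments shifted. Concretely, the multinomial step gives
\[
\frac{k!}{k_1!\,k_2!}\,(k_1+n+1-i)^{k_1}\,(k_2+i)^{k_2}\ \le\ (k+n+1)^k,
\]
and multiplying by $(k_1+n+1-i)^{n+1-i}(k_2+i)^{i}\le(k+n+1)^{n+1}$ yields the $\alpha=1$ bound; then $x(1-x)\le 1/4$ with $x=(k_1+n+1-i)/(k+n+1)$ gives the $(\alpha-1)$-power decay with exponent $\min(k_1+n+1-i,\,k_2+i)$. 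Your third paragraph then correctly finishes the job: since $1\le i\le n$ forces $k_1+n+1-i\ge k_1+1$ and $k_2+i\ge k_2+1$, one has $\min(k_1+n+1-i,\,k_2+i)\ge\min(k_1+1,\,k_2+1)$, which is exactly the exponent claimed in \eqref{eq:auxbnd2}. So the argument is there, but you should strip out the false starts and state the two-line computation above explicitly rather than gesturing at ``the total shift $n+1$ distributed across both arguments.''

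The paper proceeds differently. It does a case split on whether $k_1+n+1-i\ge k_2+i$ or not, and in each case uses the monotonicity of $j\mapsto(a+j)^{\alpha(a+j)}(b-j)^{\alpha(b-j)}$ on $[0,b]$ (for $a\ge b$) to push the entire shift to one side, reducing to the pair $(k_1+n,\,k_2+1)$ or $(k_1+1,\,k_2+n)$. Then it cites Lemma~\ref{lem:aux} directly (with $k_2$ replaced by $k_2+1$, respectively $k_1$ by $k_1+1$), together with the trivial bound $\frac{k!}{k_1!\,k_2!}\le\frac{(k+1)!}{k_1!\,(k_2+1)!}$. Your approach is arguably cleaner once written out, since it avoids the monotonicity lemma and the case split; the paper's approach has the advantage of invoking Lemma~\ref{lem:aux} as a black box rather than reopening its proof.
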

\begin{proof}
Suppose first that $k_1+n+1-i\ge k_2+i$.
Then we have
$k_1+n\ge k_2+2\,i-1\ge k_2+1$.
For $a\ge b> 0$, the function
$g(j):= (a+j)^{\alpha(a+j)}\,(b-j)^{\alpha(b-j)}$
is non-decreasing on the interval $0\le j \le b$. 
We apply this with $a=k_1+n+1-i$, $b=k_2+i$, and
$j=i-1$. The inequality $g(0)\le g(j)$ yields
$$
(k_1+n+1-i)^{\alpha\,(k_1+n+1-i)}\ (k_2+i)^{\alpha\,(k_2+i)}\
\le\ (k_1+n)^{\alpha\,(k_1+n)}\ (k_2+1)^{\alpha\,(k_2+1)}.
$$
Since $k_2+1<k_1+n$, 
Lemma \ref{lem:aux} with $k_2+1$ in place of $k_2$  implies
$$
\frac{(k+1)!}{k_1!\,(k_2+1)!}\
(k_1+n)^{\alpha\,(k_1+n)}\ (k_2+1)^{\alpha\,(k_2+1)}\ \le\
4^{-(\alpha-1)(k_2+1)}\ (k+n+1)^{\alpha\,(k+n+1)}.
$$
Therefore,
\ba\nonumber
&&\frac{(k+1)!}{k_1!\ (k_2+1)!}\
(k_1+n+1-i)^{\alpha\,(k_1+n+1-i)}\ (k_2+i)^{\alpha\,(k_2+i)}
\\[3mm]\label{ineq1}
&&\le\quad
4^{-(\alpha-1)(k_2+1)}\ (k+n+1)^{\alpha\,(k+n+1)}.
\ea

Suppose next that $k_1+n+1-i<k_2+i$.
We again use $g(0)\le g(j)$, but with $a=k_2+i$, $b=k_1+n+1-i$
and $j=n-i$.  This yields
$$
(k_1+n+1-i)^{\alpha\,(k_1+n+1-i)}\ (k_2+i)^{\alpha\,(k_2+i)}\ \le\
(k_1+1)^{\alpha\,(k_1+1)}\ (k_2+n)^{\alpha\,(k_2+n)}.
$$
We note that $k_1+n+1-i<k_2+i$ implies
$k_1+1<k_2-n+2\,i$. Since $i\le n$, we have $k_1+1<k_2+n$.  We apply Lemma \ref{lem:aux} with
$k_1$ replaced by $k_2$ and $k_2$ replaced by $k_1+1$.
This yields
$$
\frac{(k+1)!}{(k_1+1)!\ k_2!}\
(k_2+n)^{\alpha\,(k_2+n)}\ (k_1+1)^{\alpha\,(k_1+1)}\ \le\
4^{-(\alpha-1)\,(k_1+1)}\ (k+n+1)^{\alpha\,(k+n+1)}.
$$ 
Therefore,
\ba\nonumber
&&\frac{(k+1)!}{(k_1+1)!\ k_2!}\
(k_1+n+1-i)^{\alpha\,(k_1+n+1-i)}\ (k_2+i)^{\alpha\,(k_2+i)}
\\[3mm]\label{ineq2}
&&\le\quad  4^{-(\alpha-1)(k_1+1)}\ (k+n+1)^{\alpha\,(k+n+1)}\,.
\ea

Inequalities (\ref{ineq1}) and (\ref{ineq2}) imply (\ref{eq:auxbnd2})
because
$$
\frac{k!}{k_1!\ k_2!}\ \le\
\min\ \left\{\ \frac{(k+1)!}{k_1!\ (k_2+1)!},\
\frac{(k+1)!}{(k_1+1)!\ k_2!}\ \right\}\,.
$$
\end{proof}

We also have the following consequence of Corollary \ref{cor:aux1}:  
\begin{cor}\label{cor:aux2}
For $\alpha>1$, $k=1,\,2,\,3,\,\cdots$, and $n=0,\,1,\,2,\,\cdots$,
we have 
\ba\nonumber
&&\sum_{i=1}^n\ \sum_{\ k_1+k_2=k}\
\frac{1}{(10\,n+10.3-10\,i)^2\ (10\,i+0.3)^2}\
\\[3mm]\nonumber
&&\hspace{3cm}\times\quad\frac{k!}{k_1!\ k_2!}\
(k_1+3\,(n+1-i))^{\alpha\,(k_1+3(n+1-i)}\ (k_2+3\,i)^{\alpha\,(k_2+3i)} 
\\[3mm]\label{eq:auxbndsum}
&\le&\frac{0.05\cdot4^{-(\alpha-3/2)}}{1-4^{-(\alpha-1)}}\
\frac{(k+3\,(n+1))^{\alpha\,(k+3(n+1))}}{(10\,n+10.3)^2}\,.
\ea
%
\end{cor}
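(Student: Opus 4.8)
The plan is to reduce \eqref{eq:auxbndsum} to Corollary~\ref{cor:aux1} by a change of summation variable, and then to separate the sum over $i$ from the sum over $(k_1,k_2)$. First I would invoke Corollary~\ref{cor:aux1} with $n$ replaced by $3n+2$ and $i$ replaced by $3i$. The constraint $1\le i\le n$ forces $3\le 3i\le 3n\le 3n+2$, so the hypothesis of that corollary is met, and since $(3n+2)+1=3(n+1)$ it yields, for every admissible $i$ and every splitting $k_1+k_2=k$,
\[
\frac{k!}{k_1!\,k_2!}\,(k_1+3(n+1-i))^{\alpha(k_1+3(n+1-i))}\,(k_2+3i)^{\alpha(k_2+3i)}\ \le\ 4^{-(\alpha-1)\min(k_1+1,k_2+1)}\,(k+3(n+1))^{\alpha(k+3(n+1))}.
\]
Since the right-hand side does not depend on $i$, inserting it into the left-hand side of \eqref{eq:auxbndsum} pulls out the common factor $(k+3(n+1))^{\alpha(k+3(n+1))}$ and leaves the product of the two sums $\sum_{i=1}^n (10n+10.3-10i)^{-2}(10i+0.3)^{-2}$ and $\sum_{k_1+k_2=k}4^{-(\alpha-1)\min(k_1+1,k_2+1)}$.

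For the $(k_1,k_2)$-sum I would use a geometric series: for each integer $m\ge1$ at most two pairs with $k_1+k_2=k$ satisfy $\min(k_1+1,k_2+1)=m$, so the sum is at most $2\sum_{m\ge1}4^{-(\alpha-1)m}=2\cdot4^{-(\alpha-1)}/(1-4^{-(\alpha-1)})=4^{-(\alpha-3/2)}/(1-4^{-(\alpha-1)})$, which is exactly the $\alpha$-dependent prefactor on the right of \eqref{eq:auxbndsum} (here $\alpha>1$ guarantees convergence).

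It then remains to prove the purely numerical bound
\[
\sum_{i=1}^n\ \frac{1}{(10i+0.3)^2\,(10(n+1-i)+0.3)^2}\ \le\ \frac{0.05}{(10n+10.3)^2}.
\]
Writing $a_i=10i+0.3$ and $b_i=10(n+1-i)+0.3$, one has $a_i+b_i=10n+10.6>10n+10.3$ and $b_i=a_{n+1-i}$, so the identity $1/(a_i^2 b_i^2)=(a_i+b_i)^{-2}(a_i^{-1}+b_i^{-1})^2$ reduces the claim to $\sum_{i=1}^n(a_i^{-1}+b_i^{-1})^2<0.05$. Expanding the square and using the symmetry $b_i=a_{n+1-i}$ rewrites this as $2\sum_{i=1}^n a_i^{-2}+2\sum_{i=1}^n(a_ib_i)^{-1}$; the first term is controlled by the convergent tail $2\sum_{i\ge1}(10i+0.3)^{-2}$, which I would bound by comparing $(10i+0.3)^{-2}$ with the telescoping quantity $\frac{1}{10}\bigl((10(i-1)+0.3)^{-1}-(10i+0.3)^{-1}\bigr)$, and the second term equals $\frac{4}{10n+10.6}\sum_{i=1}^n(10i+0.3)^{-1}$ because $a_i+b_i$ does not depend on $i$, which is $O(n^{-1}\log n)$ and hence uniformly small in $n$. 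Multiplying the three estimates together reproduces the right-hand side of \eqref{eq:auxbndsum}.

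The main obstacle is this last numerical step: the constant $0.05$ is tight enough that the ``diagonal'' contribution $2\sum a_i^{-2}$ and the ``off-diagonal'' contribution $2\sum(a_ib_i)^{-1}$ are largest at different values of $n$, so one cannot bound each by its supremum separately; instead one peels off the first few terms of the convergent tail and checks the smallest few values of $n$ by direct computation in order to get below $0.05$ for every $n\ge1$ (the case $n=0$ being vacuous).
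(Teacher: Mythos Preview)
Your proposal is correct and follows essentially the same route as the paper: apply Corollary~\ref{cor:aux1} (with the appropriate substitution) to factor the double sum, bound the $(k_1,k_2)$ geometric sum to produce the constant $4^{-(\alpha-3/2)}/(1-4^{-(\alpha-1)})$, and then prove the numerical inequality $\sum_{i=1}^n(10i+0.3)^{-2}(10(n{+}1{-}i)+0.3)^{-2}<0.05\,(10n+10.3)^{-2}$ via the identity $(a_ib_i)^{-2}=(a_i+b_i)^{-2}(a_i^{-1}+b_i^{-1})^2$, the symmetry $i\leftrightarrow n{+}1{-}i$, and a direct check for the first few values of $n$. The paper carries out exactly this plan, verifying $n=1,2,3$ by hand and using $\sum_{i\ge1}i^{-2}=\pi^2/6$ (rather than a telescoping comparison) together with $\tfrac{4}{n+1}\sum_{i=1}^n i^{-1}\le\tfrac53$ for $n\ge4$ to land just under $0.05$.
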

\begin{proof}
Using Corollary \ref{cor:aux1}, we estimate the sum over
$k_1+k_2=k$ by
\ba\nonumber
&&\sum_{\ k_1+k_2=k}\
\frac{k!}{k_1!\,k_2!}\
 (k_1+3\,(n+1-i))^{\alpha\,(k_1+3(n+1-i))}\ (k_2+3\,i)^{\alpha\,(k_2+3\,i)}
\\[3mm]\nonumber
&\le&(k+3\,(n+1))^{\alpha(k+3(n+1))}\
\sum_{k_1+k_2=k}\ 4^{-(\alpha-1)\,\min\,(k_1+1,\,k_2+1)}
\\[3mm]\nonumber
&\le&(k+3\,(n+1))^{\alpha\,(k+3(n+1))}\ 2\,\cdot\,4^{-(\alpha-1)}\
\sum_{k_1=0}^\infty\ 4^{-(\alpha-1)\,k_1}
\\[3mm]\nonumber
&=&(k+3\,(n+1))^{\alpha\,(k+3(n+1))}\ 
\frac{4^{-(\alpha-3/2)}}{1-4^{-(1-\alpha)}}.
\ea
%
%
The result follows from this and the inequality
\be\label{crazy}
\sum_{i=1}^n\ \frac{1}{(10\,n+10.3-10\,i)^2\ (10\,i+0.3)^2}\quad
<\quad\frac{0.05}{(10\,n+10.3)^2}\,.
\ee

For $n=1,\,2,\,3$, we check this inequality by explicit computation.
For $n\ge 4$, we first note that
\be\label{eq:ab}
\frac 1{(a-b)\ b}\ =\ \frac 1a\ \left(\ \frac 1{a-b}\,+\,\frac 1b\ \right).
\ee
We take the square of both sides and use $a=10\,n+10.6$ and $b=10\,i+0.3$.
We then sum over $i$ from $1$ to $n$. Using the $i \leftrightarrow n+1-i$ symmetry, this yields
\begin{eqnarray*}
&&\sum_{i=1}^n\ \frac{1}{(10\,n+10.3-10\,i)^2\ (10\,i+0.3)^2}
\\[3mm]
&=&\frac{2}{(10\,n+10.6)^2}\ \left\{\sum_{i=1}^n\ \frac{1}{(10\,i+0.3)^2}\ +\  
 \sum_{i=1}^n\ \frac{1}{(10\,n+10.3-10\,i)(10\,i+0.3)}\right\}
\\[3mm]
&=&\frac{2}{(10\,n+10.6)^2}\ \sum_{i=1}^n\ \frac{1}{(10\,i+0.3)^2}\ +\  
\frac{4}{(10\,n+10.6)^3}\ \sum_{i=1}^n\ \frac{1}{(10\,i+0.3)}\,,
\end{eqnarray*}
where in the second step we have used \eqref{eq:ab} and the $i \leftrightarrow n+1-i$ symmetry one more time.

The first term on the right is bounded by
$$
\frac{2/100}{(10\,n+10.6)^2}\ \sum_{i=1}^\infty\ \frac 1{i^2}\ =\
\frac{1/100}{(10\,n+10.6)^2}\ \frac{\pi^2}3.
$$
The second term is bounded by
$$
\frac{4/100}{(10\,n+10.6)^2}\ \frac{1}{n+1}\ \sum_{i=1}^n\ \frac 1i.
$$
For $n\ge 4$,
$$
\frac{4}{n+1}\ \sum_{i=1}^n\ \frac 1i\ \le\ \frac{5}{3}.
$$
Combining these results,
\ba\nonumber
\sum_{i=1}^n\ \frac{1}{(10\,n+10.3-10\,i)^2\ (10\,i+0.3)^2}
&\le&\frac{(\pi^2+5)/300}{(10\,n+10.6)^2}
\\[3mm]\nonumber
&<&\frac{0.05}{(10\,n+10.6)^2}\,,
\ea
which proves (\ref{crazy}).
\end{proof}

\vskip 4mm
Arguments similar to the proof of Lemma \ref{lem:aux}
generalize to prove the following:
\begin{lemma}\label{lem:aux1}
For $\alpha>1$, define
$\kappa_\alpha:=\frac{4^{-(\alpha-3/2)}}{1-4^{-(\alpha-1)}}$.
Then  for $k=1,\,2,\,3,\,\cdots$ and\\ $n=0,\,1,\,2,\,\cdots$,
\ba\label{eq:auxbnd'}
&&\hspace{-3cm}\sum_{\ k_1+k_2+k_3=k}\quad
\frac{k!}{k_1!\,k_2!\,k_3!}\
(k_1+n)^{\alpha(k_1+n)}\ (k_2)^{\alpha k_2}\ (k_3)^{\alpha k_3}\ \le \ \kappa_\alpha^2\ (k+n)^{\alpha(k+n)}\,;
\\[3mm]
&&\hspace{-3cm} \sum_{\substack{\ k_1+k_2+k_3\\+k_4=k}}\quad
\frac{k!}{k_1!\,k_2!\,k_3!\,k_4!}\
(k_1+n)^{\alpha(k_1+n)}\ (k_2)^{\alpha k_2}\
(k_3)^{\alpha k_3}\ (k_4)^{\alpha k_4}\ \le\ \kappa_\alpha^3\ (k+n)^{\alpha(k+n)}\,.
\ea
%
\end{lemma}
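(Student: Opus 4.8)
The plan is to imitate the proof of Lemma~\ref{lem:aux}: split the exponent as $\alpha=1+(\alpha-1)$, bound the two pieces of a single summand separately, and then sum a geometric-type series over the compositions appearing in the multinomial sum. First I would note that Lemma~\ref{lem:aux} remains valid when $n=0$ (with the convention $0^0:=1$): the $\alpha=1$ step $\frac{k!}{k_1!k_2!}k_1^{k_1}k_2^{k_2}\le k^k$ is again a single term of the binomial expansion of $(k_1+k_2)^k$, and the inequality $k_1k_2\le\frac14(k_1+k_2)^2$ behind the $\alpha>1$ refinement is unaffected. Iterating the two-variable estimate $a^a b^b\le 4^{-\min(a,b)}(a+b)^{a+b}$ contained in \eqref{firsthalf}--\eqref{secondhalf} — merging the largest base with each of the remaining ones in turn, gaining a factor $4^{-(\text{base absorbed})}$ at each step — I would then establish the three- and four-term refinements
\[
a^a b^b c^c\ \le\ 4^{-(T-\max)}\,T^{T}\,,\qquad
a^a b^b c^c d^d\ \le\ 4^{-(T-\max)}\,T^{T}\,,
\]
where $T$ is the total of the nonnegative numbers involved and $\max$ their maximum.

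Multiplying such a refinement (raised to the $(\alpha-1)$ power) by the one-term multinomial identity $\frac{k!}{k_1!k_2!k_3!}(k_1+n)^{k_1}k_2^{k_2}k_3^{k_3}\le (k+n)^k$ and by $(k_1+n)^n\le (k+n)^n$ yields, for each composition, a per-summand bound $\le 4^{-(\alpha-1)D}(k+n)^{\alpha(k+n)}$ with $D=(k+n)-\max(k_1+n,k_2,k_3)$, i.e. the sum of the two smallest of $k_1+n,k_2,k_3$ (with a fourth index, $D$ becomes the sum of the three smallest). The remaining task is to sum $4^{-(\alpha-1)D}$ over all compositions. Setting $q=4^{-(\alpha-1)}<1$ and classifying compositions by which part attains the maximum, the number of compositions with a prescribed value $D=j$ is at most linear in $j$ (quadratic, for four indices), since fixing the maximizing part pins one index and leaves the others ranging over the compositions of $j$; summing $\sum_j(\text{count})\,q^{\,j}$ in closed form then produces the constants $\kappa_\alpha^2=\bigl(\tfrac{2q}{1-q}\bigr)^2$ and $\kappa_\alpha^3=\bigl(\tfrac{2q}{1-q}\bigr)^3$ of the lemma.

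The step I expect to be delicate is the treatment of the boundary compositions in which one of the lighter indices $k_2,k_3$ (resp.\ $k_2,k_3,k_4$) vanishes: the associated factor is then $1$ rather than a genuinely decaying power, so the lower bound $D\ge 2$ (resp.\ $3$) that makes the geometric series sum to the stated constant is no longer available. Such terms must be split off and estimated by reducing to the inequality with one index fewer — in the base case, to the two-variable summed form of Lemma~\ref{lem:aux} — so that the counting still closes; together with the observation that once two parts have been merged the merged part is automatically $\ge 1$ (hence every subsequent $\min$ is $\ge 1$), this is the exact analogue here of the case distinction $k_2\lessgtr k_1+n$ in the proof of Lemma~\ref{lem:aux} and of the role of the $+1$ shifts in Corollary~\ref{cor:aux1}.
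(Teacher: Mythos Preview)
The paper gives no proof of this lemma beyond the sentence ``Arguments similar to the proof of Lemma~\ref{lem:aux} generalize to prove the following,'' so your proposal is exactly the natural attempt at such a generalization, and the strategy (split $\alpha=1+(\alpha-1)$, use the multinomial bound for the first piece, iterate the two--term merge $a^ab^b\le 4^{-\min(a,b)}(a+b)^{a+b}$ for the second, then sum a geometric series) is the right one.

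However, the step you flag as merely ``delicate'' is in fact fatal: the inequality as stated is \emph{false}. The single composition $(k_1,k_2,k_3)=(k,0,0)$ already contributes
\[
\frac{k!}{k!\,0!\,0!}\,(k+n)^{\alpha(k+n)}\cdot 1\cdot 1\ =\ (k+n)^{\alpha(k+n)}
\]
to the left-hand side, so the sum is at least $(k+n)^{\alpha(k+n)}$, and the stated bound $\kappa_\alpha^2(k+n)^{\alpha(k+n)}$ fails whenever $\kappa_\alpha<1$, i.e.\ whenever $\alpha>1+\log_4 3\approx 1.79$. (Concretely: for $\alpha=2$, $n=3$, $k=1$ the left side is $4^8+2\cdot 3^6=66994$ while the right side is $(2/3)^2\cdot 4^8\approx 29127$.) Your instinct that the boundary compositions with $k_2=0$ or $k_3=0$ are the trouble is correct, but splitting them off and ``reducing to the inequality with one index fewer'' cannot rescue the constant: already the two--index summed form of Lemma~\ref{lem:aux} gives only $\sum_{k_2=0}^{k}q^{\min(k_1+n,\,k_2)}\le 2/(1-q)$, not $2q/(1-q)$, because the $k_2=0$ term is $q^0=1$. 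Your claim that ``summing $\sum_j(\text{count})\,q^j$ in closed form produces $\kappa_\alpha^2$'' is therefore where the argument breaks.

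What your method \emph{does} prove, cleanly, is the same inequality with $\kappa_\alpha=2q/(1-q)$ replaced by $2/(1-q)$: simply iterate Lemma~\ref{lem:aux} (extended to $n=0$ as you note) twice, summing the geometric series honestly from $j=0$. In the paper's applications the lemma is invoked only with $\alpha$ replaced by $2\alpha>2$ and $n\ge 3$, where this corrected constant is still bounded (at most $(8/3)^2$, resp.\ $(8/3)^3$); so the overall induction can be repaired, but the numerical prefactors $0.4$ and $0.5$ in \eqref{eq:Ssan}--\eqref{eq:firstt} must be renegotiated.
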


\begin{thebibliography}{10}

\bibitem{ADKS}
Aharonov, D., van Dam, W., Kempe, J., Landau, Z., Lloyd, S., and
Regev, O.:\quad 
\newblock Adiabatic Quantum computation is equivalent to standard
quantum computation,
\newblock {\em Proceedings of the 45th Annual IEEE Symposium
on Foundations of Computer Science}, 42--51, (2004).

\bibitem{AE}
Avron, J.~E.~and Elgart, A.:\quad
\newblock Adiabatic theorem without a gap condition.
\newblock {\em Commun.~Math.~Phys.}, {\bf 203}, 445--463 (1999).

\bibitem{asy}
Avron, J.~E., Seiler, R., and Yaffe, L.~G.:\quad
\newblock Adiabatic theorems and applications to the quantum Hall effect.
\newblock {\em Commun. Math.~Phys.}, {\bf 110} 33--49 (1987). (Erratum: {\em Commun.~Math.~Phys.} {\bf 153}, 649--650 (1993)).

\bibitem{CE}
Cao, Z.~and Elgart, A.:\quad
\newblock On the efficiency of Hamiltonian--based quantum computation
for low--rank matrices.
\newblock  {\em J.~Math.~Phys.}, {\bf 53}  032201 (2012).

\bibitem{elgart}
Elgart, A.:\quad
\newblock Deviations from adiabatic approximation.
\newblock talk presented at Workshop on
Mathematical Aspects of Quantum Adiabatic Approximation,
Perimeter Institute, Waterloo, Canada (2006).
\newblock The talk is available at
\url{http://streamer.perimeterinstitute.ca/mediasite/viewer}.

\bibitem{ESn}
Elgart, A.~and Schlein, B.:\quad
\newblock Adiabatic charge transport and the Kubo formula for Landau
type Hamiltonians.
\newblock {\em Comm.~Pure Appl.~Math.}, {\bf 57}, 590--615 (2004).

\bibitem{FGGS}
Farhi, E., Goldstone, J., Gutmann, S., Lapan, J., Lundgren, A., and
Preda, D.:\quad
\newblock A quantum adiabatic evolution algorithm applied to
random instances of an {NP}-complete problem.
\newblock {\em Science}, {\bf 292},  472--476  (2001).

\bibitem{Gevrey}
Gevrey, M.:\quad
\newblock Sur la nature analytique des solutions des \'equations
aux d\'eriv\'es partielles. Premier m\'emoire.,
\newblock {\em Ann.~de l'Ecole norm.~sup.}, {\bf 3 35}, 129--190 (1918).

\bibitem{G}
Grover, L.~K.:\quad
\newblock A fast quantum mechanical algorithm for database search.
\newblock {\em Proceedings of the twenty-eighth annual ACM,
symposium on the Theory of Computing}, 212--219, (1996).

\bibitem{HLZ}
Hagedorn, G.~A.:\quad
\newblock Proof of the Landau--Zener Formula in an Adiabatic Limit
with Small Eigenvalue Gaps.
\newblock {\em Commun. Math.~Phys.} {\bf 136}, 433--449 (1991).

\bibitem{HJ}
Hagedorn, G.~A. and Joye, A.:\quad
\newblock Elementary Exponential Error Estimates for the
Adiabatic Approximation.
\newblock {\em J. Math. Anal. Appl.}, {\bf 267}, 235--246 (2002).

\bibitem{JS}
Jak{\v{s}}i{\'c}, V. and Segert, J.:\quad
\newblock Exponential approach to the adiabatic limit and the
Landau--Zerner formula.
\newblock {\em Rev.~Math.~Phys.}, {\bf 4}, 529--574 (1992).

\bibitem{JRS}
Jansen, S., Ruskai, M.~B., and Seiler, R.:\quad
\newblock Bounds for the adiabatic approximation with applications to
quantum computation.
\newblock  {\em J.~Math.~Phys.}, {\bf 48}, 102111--102126 (2007).

\bibitem{J}Joye, A.:\quad
\newblock Proof of the Landau-Zener Formula.
\newblock {\em Asymptotic Analysis} {\bf 9}, 209--258 (1994).

\bibitem{JP}
Joye, A. and Pfister, C.~E.:\quad
\newblock Exponentially small adiabatic invariant for the
Schr\"odinger equation.
\newblock  {\em Commun.~Math.~Phys.}, {\bf 140}, 15--41 (1991).

\bibitem{JP2}
Joye, A. and Pfister, C.~E.:\quad
\newblock Superadiabatic evolution and adiabatic transition probability
between two non-degenerate levels isolated in the Spectrum.
\newblock {\em J.~Math.~Phys.} {\bf 34}, 454--479 (1993).

\bibitem{Jung}
Jung, K.:\quad
\newblock Adiabatic Theorem for Switching Processes
with Gevrey Class Regularity.
\newblock Technical University of Berlin,
Sfb 288 Preprint No. 442 and Dissertation, (1997).

\bibitem{K}Kato, T.:\quad
\newblock On the adiabatic theorem of quantum mechanics.
\newblock {\em J.~Phys.~Soc.~Japan} {\bf 5}, 435--439 (1950).

\bibitem{Lenard}
Lenard, A.:\quad
\newblock Adiabatic invariants of all orders.
\newblock  {\em Ann.~Phys.}, {\bf 6}, 261--276 (1959).

\bibitem{LRH}
Lidar, D.~A., Rezakhani, A.~T., and Hamma, A.:\quad
\newblock Adiabatic approximation with exponential accuracy
for many-body systems and quantum computation.
\newblock  {\em J.~Math.~Phys.}, {\bf 50}, 102106--102132 (2009).

\bibitem{martinez}
Martinez, A.:\quad 
\newblock  Precise exponential estimates in adiabatic theory,
\newblock  {\em J.~Math.~Phys.}, {\bf 35}, 3889--3915 (1994).

\bibitem{NS}
Nakamura, S. and Sordoni, V.:\quad
\newblock  A note on exponential estimates in adiabatic theory.
\newblock  {\em Comm.~Partial Differential Equations},
{\bf 29}, 111--132 (2004).

\bibitem{N}Nenciu, G.:\quad
\newblock On the Adiabatic Theorem in Quantum Mechanics.
\newblock {\em J.~Phys.~A:~Math.~Gen.} {\bf 13}, L15 (1980).

\bibitem{nenciu}
Nenciu, G.:\quad
\newblock Linear Adiabatic Theory: Exponential Estimates.
\newblock  {\em Commun.~Math.~Phys.}, {\bf 152}, 479--496 (1993).

\bibitem{N2}
Nenciu, G.:\quad
\newblock Exponential Estimates in Linear Adiabatic Theory:
Dependence on the Gap.
\newblock Mittag--Leffler preprint (1993).

\end{thebibliography}
\end{document}